\def\BB{{\mathcal B}}
\def\REMARK#1{}
\def\REMARK#1{{\color{blue}{#1}}}
\newcommand{\cleqn}{\setcounter{equation}{0}}
\newcommand{\clth}{\setcounter{theorem}{0}}
\newcommand {\sectionnew}[1]{\section{#1}\cleqn\clth}
\newtheorem{theorem}{Theorem}[section]
\newtheorem{lemma}[theorem]{Lemma}
\newtheorem{proposition}[theorem]{Proposition}
\newtheorem{corollary}[theorem]{Corollary}
\newtheorem{definition}[theorem]{Definition}
\renewcommand{\P}{\mathcal{P}}
\newcommand{\N}{\mathcal{N}}
\renewcommand{\L}{\mathcal{L}}
    \newcommand{\Rmnum}[1]{\expandafter\@slowromancap\romannumeral #1@}
\def\({\left(}
\def\){\right)}
\def\[{\left[}
\def\]{\right]}
\def\d{\partial}
\def\d{\partial}
\def\ep{\epsilon}
\def\La{\Lambda}
\def\la{\lambda}
\def\om{\omega}
\begin{document}
\begin{CJK*}{GBK}{song}
\title{Block type symmetry of bigraded Toda Hierarchy}
\author[Chuanzhong Li, Jingsong He,  Yucai Su]{Chuanzhong Li\S\dag, Jingsong He\S $^*$,  Yucai Su\ddag\dag}
\dedicatory {  \S Department of Mathematics, Ningbo University, Ningbo, 315211 Zhejiang, P.\ R.\ China\\
 \dag Department of Mathematics, USTC, Hefei, 230026 Anhui, P.\ R.\ China\\
 \ddag Department of Mathematics, Tongji University, Shanghai, 200092, P.\ R.\ China}

\thanks{$^*$ Corresponding author: hejingsong@nbu.edu.cn, jshe@ustc.edu.cn}
\texttt{}

\date{}

\begin{abstract}
 In this paper, we define Orlov-Schulman's  operators $M_L$, $M_R$, and then use them to
 construct  the additional
symmetries  of the bigraded Toda hierarchy (BTH). We further show
that  these additional symmetries form an interesting infinite
dimensional Lie algebra known as a Block type Lie algebra,
 whose structure theory and  representation theory have recently received much attention in literature.
By acting on two different spaces under the weak W-constraints we find in
particular two representations of this Block type Lie algebra.
\end{abstract}


\maketitle
\noindent Mathematics Subject Classifications (2000).  37K05, 37K10, 37K20, 17B65, 17B67.\\
Keywords:   bigraded Toda hierarchy,   additional symmetry, Block type Lie algebra.\\
 \tableofcontents
\allowdisplaybreaks
\sectionnew{Introduction}

The Toda lattice equation as a completely integrable system was
introduced by Toda \cite{Toda} to describe an infinite system of
masses on a line that interact through an exponential force.
Inspired by the Sato theory on  the Kadomtsev-Petviashvili (KP)
hierarchy \cite{sato,djkm}, the
two dimensional Toda hierarchy
was constructed  by  Ueno and Takasaki \cite{UT} with the
help of difference
operators and infinite dimensional Lie algebras.
The one dimensional Toda hierarchy (TH) was also studied under the
reduction condition $L+L^{-1}=M+M^{-1}$ \cite{UT} or $L=M$  on two Lax operators
$L$ and $M$. The bigraded Toda hierarchy (BTH) of $(N,M)$-type(or simply the $(N,M)$-BTH) is the generalized Toda hierarchy
whose infinite Lax matrix has $N$ upper and $M$ lower nonzero diagonals\cite{solutionBTH}. After continuous interpolation, $N$ and $M$ correspond to the highest and lowest powers of Laurent polynomials which is the Lax operator.
The $(N,M)$-BTH can be naturally considered as a reduction of the two-dimensional Toda hierarchy
by imposing an algebraic relation to those two Lax operators(see \cite{UT, solutionBTH}).
One dimensional Toda hierarchy(i.e. $(1,1)$-BTH) and the BTH have been shown to be related to many mathematical and physical fields such
as the inverse scattering method, finite and infinite dimensional
algebras, classical and quantum field theories and so on. Recently, in
\cite{CDZ,DZ virasora}, the interpolated Toda lattice
hierarchy was generalized to the so-called extended Toda hierarchy (ETH) for considering
its application on the topological field theory. In \cite{C}, the TH and ETH were further
generalized to the extended bigraded Toda hierarchy (EBTH) by considering $N+M$ dependent variables
$\{u_{N-1},u_{N-2},\cdots, u_1,u_0,u_{-1,},\cdots,u_{-M}\}$ in the Lax operator $\L$.
This new model has been expected \cite{C} that it might be relevant for applications in
describing the Gromov-Witten invariants. In fact the dispersionless case of that model has
been proposed  in \cite{CMP kodama} because the dispersionless EBTH
can be obtained from the dispersionless KP hierarchy.  In \cite{TH},
the Hirota bilinear equations (HBEs) of EBTH have been given
conjecturally and proved that it governs the Gromov-Witten theory of
orbifold $c_{km}$. In \cite{our paper}, the authors generalize the
Sato theory to the EBTH and give the Hirota bilinear equations in
terms of vertex operators whose coefficients  take values in the
algebra of differential operators. In \cite{Carlet frobenius}, a geometric structure
associated with Frobenius manifold of 2D Toda hierarchy was introduced.
Furthermore, motivated by the potential applications of the
BTH, which is also defined by omitting the extended logarithmic flows  of the
EBTH, in the theory of the matrix models, it is necessary and interesting to explore its algebraic structure from the point of
view of the additional symmetry.

Additional symmetries  of KP hierarchy were given by Orlov and
Shulman \cite{os1} through two novel operators $\Gamma$ and $M$, which can be used
to form a centerless  $W$ algebra.  Based on this work, there exist
many extensive results (e.g., [\ref{adler92}--\ref{jingsong he}])
on the additional symmetries of the KP hierarchy, Toda hierarchy, 2-D
Toda hierarchy, BKP hierarchy and CKP hierarchy. Particularly, the representations of the
infinite dimensional Virasoro algebra and $W$
algebra have been derived by using the actions on the Lax operator,
the wave function and the $\tau$ function of additional symmetry flows.
These results inspire us to search new infinite dimensional algebras from the additional symmetry flows of the
BTH. So the purpose of this paper is to give the additional symmetries of the BTH and then identify its algebraic
structure. In \cite{kodama W algebra}, the additional symmetries of KP hierarchy were generalized
to a $W_{1+\infty}$ algebra. However, the commutative relations of the $W_{1+\infty}$  algebra are rather
complicated. The algebra under consideration in this paper is very simple and elegant, which is an infinite
dimensional Lie algebra $\BB$, known as a {\it Block type Lie algebra}, introduced by Block \cite{Block} around 50 years ago.
This kind of Lie algebra is an interesting object in the structure theory and representation theory of Lie
algebras, partly due to its close relation with the Virasoro algebra and the Virasoro-like algebra
(e.g., [\ref{DZ}--\ref{Su}]).
To our best knowledge, this is the first time to bring Block type Lie algebras to the integrable
system.
In particular, we obtain two representations of the Lie algebra $\BB$
on two spaces of functions $P_L$, $P_R$ respectively (Theorem \ref{Rep-Block}), which (to the best of our knowledge) are the first known
examples of representations of $\BB$ with the actions of generating elements of $\BB$ being explicitly given.
Another quite different feature of additional symmetries in the BTH is that
$M\triangleq M_L-M_R $ operator for constructing additional flows commutes with Lax operator $\L$, i.e.,
$[\L,M]=0$. This is a crucial fact to find Block type Lie algebra here. Note that  $[L,M]=1$ holds for other known
integrable systems such as KP hierarchy.

The paper is organized as follows. In Section 2, the two dimensional Toda hierarchy and its reductions are  introduced explicitly by which we can define the BTH later. In Section 3,  the
definition of the BTH  and its Sato equation are introduced.  In Section 4, we define
Orlov-Schulman's $M_L$, $M_R$ operators and prove their linear equations.  The additional symmetries and related equations of the BTH will be given in
Section 5, meanwhile we prove that the additional symmetries have a nice structure of a Block type Lie algebra. In Section 6, we give some specific actions of that Block type additional flows on spaces of functions $P_L$, $P_R$  which further lead to representations of this Block type Lie algebra under so-called  weak W-constraints.
Section 7
is devoted to conclusions and discussions.
\section{The two-dimensional Toda hierarchy and its reductions}
In this section, we will show that the BTH is just a general reduction of the two-dimensional Toda  hierarchy whose special reduction leads to original Toda hierarchy.

Firstly we will introduce the definition of the two-dimensional Toda hierarchy in interpolated form as following.

The two-dimensional Toda hierarchy \cite{UT} can be defined by the following two  Lax operators,
\begin{align}
L&=\Lambda+a_0+ a_{-1}\La^{-1}+ a_{-2}\La^{-2}+\dots,\\[1.0ex]
\bar L&=
\bar a_{-1}\Lambda^{-1}+\bar a_0+ \bar a_{1}\La^{1}+ \bar a_{2}\La^{2}+\dots,
\end{align}
where $\Lambda$ represents the shift operator with $\Lambda:=e^{\epsilon\partial_x}$ and $``\epsilon"$ is
called the string coupling constant, i.e. for any function $ f(x)$
\begin{equation*}
\Lambda f(x)=f(x+\epsilon).
\end{equation*}
The coefficients $a_n$ and $\bar a_k$ are the functions of $x$ and $\{(x_n,y_n):n=1,2,\ldots\}$.
Then the Lax representation of the two-dimensional Toda hierarchy is given by the set of infinite number of
equations for $n=1,2,\ldots$,
\begin{align}
\frac{\d L}{\d x_n}&=[L^n_+,L],\qquad \frac{\d L}{\d y_n}=[\bar L^n_-,L],\\
\frac{\d \bar L}{\d x_n}&=[L^n_+,\bar L],\qquad \frac{\d \bar L}{\d y_n}=[\bar L^n_-,\bar L],
\end{align}
where $L^n_+$ represents the part of $L^n$ with non-negative powers in $\Lambda$, and
$\bar L^n_-$ represents the part of $\bar L^n$ with negative powers in $\Lambda$.
In particular, the Lax equations for $n=1$ provide the system for $(a_0,\bar a_{-1})$,
\begin{equation}\left\{
\begin{array}{llll}
\displaystyle{\frac{\d \bar a_{-1}(x)}{\d x_1}}&=\bar a_{-1}(x)(a_0(x)-a_0(x-\ep)),\\[1.5ex]
\displaystyle{\frac{\d  a_{0}(x)}{\d y_1}}&=\bar a_{-1}(x)-\bar a_{-1}(x+\ep).
\end{array}\right.
\end{equation}
With the function $u(x)$ defined by $ a_{0}(x)=\d_{x_1}u(x)$ and $ \bar a_{-1}(x)=e^{u(x)-u(x-\ep)} $,  the two-dimensional Toda equation,
\begin{equation}
\frac{\d^2 u(x)}{\d x_1\d y_1}=e^{u(x)-u(x-\ep)}-e^{u(x+\ep)-u(x)}
\end{equation}
is given.

The 1-D Toda equation is given by the reduction,
\begin{equation}\label{1Toda}
L=\bar L=\Lambda+a_0+a_{-1}\Lambda^{-1}.
\end{equation}
Then the Lax equations for $x_1$ and $y_1$ give
\begin{equation}
\left(\frac{\partial }{\partial x_1}+\frac{\partial}{\partial y_1}\right)L=[L_++\bar L_-,L]=[L,L]=0.
\end{equation}
This implies that $L$ does not depend on the variable $s_1:=x_1+y_1$.  Then
the two-dimensional Toda equation  is reduced to the 1-D Toda equation, and with $t_1:=x_1-y_1, \bar u(x)=-u(x)$, we have
the standard 1-D Toda equation as
\begin{equation}
\frac{\d^2 \bar u(x)}{\d t_1^2}=e^{\bar u(x-\ep)-\bar u(x)}-e^{\bar u(x)-\bar u(x+\ep)}.
\end{equation}
In the next subsection,
 we  generalize the reduction (\ref{1Toda}) to the general Laurent polynomial, i.e.
 \begin{equation}\label{NMToda}
 L^N=\bar L^M,\ \ \  \,  N,M \in \N,
 \end{equation}
 which defines the $(N,M)$-BTH($L$ and $\bar L$ will correspond to fractional powers of Lax operator of the BTH (see \eqref{frac})).

\sectionnew{ Bigraded Toda hierarchy  }
The Lax form of the BTH(i.e. $(N,M)$-BTH) can be introduced as \cite{C}. For that we need to introduce
firstly the Lax operator
\begin{eqnarray}\L=\Lambda^{N}+u_{N-1}\Lambda^{N-1}+\dots + u_{-M}
\Lambda^{-M}
\end{eqnarray}
(where $N,M \geq1$ are two fixed positive integers).
  The variables $u_j$ are functions of the real variable $x$.  The Lax operator $\L$ can be
written in two different ways by dressing the shift operator
 \begin{eqnarray}\label{constraint lax}
 \L=\P_L\Lambda^N\P_L^{-1} = \P_R \Lambda^{-M}\P_R^{-1}.
 \end{eqnarray}
 Equation \eqref{constraint lax} is  quite important because it gives the reduction condition \eqref{NMToda} of the BTH from the two-dimensional Toda hierarchy.

The two dressing operators have the following form \begin{eqnarray}
&& \P_L=1+w_1\Lambda^{-1}+w_2\Lambda^{-2}+\ldots,
\label{dressP}\\
&& \P_R=\tilde{w_0}+\tilde{w_1}\Lambda+\tilde{w_2}\Lambda^2+ \ldots,
\label{dressQ} \end{eqnarray}
and their inverses have form
\begin{eqnarray}
&& \P_L^{-1}=1+\Lambda^{-1}w'_1+\Lambda^{-2}w'_2+\ldots,
\label{dressP'}\\
&& \P_R^{-1}=\tilde{w'_0}+\Lambda\tilde{w'_1}+\Lambda^2\tilde{w'_2}+ \ldots.
\label{dressQ'} \end{eqnarray}
The coefficients $\{w_i,\tilde{w_i},w'_i, \tilde{w'_i}, i\geq 0\}$
will be used more later in calculating the representation of Block algebra.  The
pair is unique up to multiplying $\P_L$ and $\P_R$ from the right
 by operators in the form  $1+
a_1\Lambda^{-1}+a_2\Lambda^{-2}+...$ and $\tilde{a}_0 +
\tilde{a}_1\Lambda +\tilde{a}_2\Lambda^2+\ldots$ respectively with
coefficients independent of $x$. From the first identity of
\eqref{constraint lax},  the relations of $u_i, -M\leq i\leq N-1$ and $w_j, j\geq 1$ are as follows (see
\cite{our paper})
\begin{eqnarray}\label{w1with u}
  \!\!  \!\!  \!\!u_{N-1}&\!\!\!\!\!\!\!\!\!\!\!\!=\!\!\!\!\!\!\!\!\!\!\!\!&w_1(x)-w_1(x\!+\!N\epsilon), \\
  \!\!   \!\!  \!\! u_{N-2}&\!\!\!\!\!\!\!\!\!\!\!\!=\!\!\!\!\!\!\!\!\!\!\!\!&w_2(x)-w_2(x\!+\!N\epsilon)
  -(w_1(x)-w_1(x\!+\!N\epsilon))w_1(x\!+\!(N\!-\!1)\epsilon), \\ \notag
  \!\!  \!\!  \!\!   u_{N-3}&\!\!\!\!\!\!\!\!\!\!\!\!=\!\!\!\!\!\!\!\!\!\!\!\!&w_3(x)-w_3(x\!+\!N\epsilon)\\\notag
   &\!\!\!\!\!\!\!\!\!\!\!\!\!\!\!\!\!\!\!&- [w_2(x)-w_2(x\!+\!N\epsilon)
  -(w_1(x)-w_1(x\!+\!N\epsilon))w_1(x\!+\!(N\!-\!1)\epsilon)]w_1(x\!+\!(N\!-\!2)\epsilon)
  \\&&-(w_1(x)-w_1(x\!+\!N\epsilon))w_2(x\!+\!(N\!-\!1)\epsilon), \\ \notag
  \notag \cdots &\!\!\!\!\!\!\!\!\!\!\!\!\cdots\!\!\!\!\!\!\!\!\!\!\!\! &\cdots
\end{eqnarray}
Moreover, by using the second identity of \eqref{constraint lax}, we can
also easily get the relations of $u_i$ and $\tilde w_j$ formally as
follows
\begin{eqnarray}\label{w0with u}
   \!\!  \!\!  \!\! u_{-M}&\!\!\!=\!\!\!&\frac{\tilde w_0(x)}{\tilde w_0(x-M\epsilon)}, \\
  \!\!  \!\!  \!\!  \!\!  \!\!  u_{-M+1}&\!\!\!=\!\!\!&\frac{\tilde w_1(x)-\frac{\tilde w_0(x)}{\tilde w_0(x-M\epsilon)}\tilde w_1(x-M\epsilon)}{\tilde w_0(x-(M-1)\epsilon)},
  \\
   \!\!  \!\! \!\!  \!\!  \!\!   u_{-M+2}&\!\!\!=\!\!\!&\frac{\tilde w_2(x)-\frac{\tilde w_0(x)}{\tilde w_0(x-M\epsilon)}\tilde w_2(x-M\epsilon)-\frac{\tilde w_1(x)-\frac{\tilde w_0(x)}
   {\tilde w_0(x-M\epsilon)}\tilde w_1(x-M\epsilon)}{\tilde w_0(x-(M-1)\epsilon)}\tilde w_1(x-(M-1)\epsilon)}{\tilde w_0(x-(M-2)\epsilon)},
\\
  \notag \cdots &\!\!\!\cdots\!\!\! &\cdots
\end{eqnarray}
These relations above will be used in the calculation later.
Given any difference  operator $A= \sum_k A_k \Lambda^k$, the
positive and negative projections are given by $A_+ = \sum_{k\geq0}
A_k \Lambda^k$ and $A_- = \sum_{k<0} A_k \Lambda^k$.

To write out  explicitly the Lax equations  of the BTH,
  fractional powers $\L^{\frac1N}$ and
$\L^{\frac1M}$ are defined by
\begin{equation}
  \notag
  \L^{\frac1N} = \Lambda+ \sum_{k\leq 0} a_k \Lambda^k , \qquad \L^{\frac1M} = \sum_{k \geq -1} b_k
  \Lambda^k,
\end{equation}
with the relations
\begin{equation}
  (\L^{\frac1N} )^N = (\L^{\frac1M} )^M = \L.
\end{equation}
Acting on free function $\la^{\frac{x}{\ep}}$, these two fraction powers can be seen as two different locally expansions around zero and infinity respectively.
It was  stressed that $\L^{\frac1N}$ and $\L^{\frac1M}$ are two
different operators even if $N=M(N, M\geq 2)$ in \cite{C} due to two different dressing operators. They can also be
expressed as following
\begin{equation} \label{frac}
  \L^{\frac1N} = \P _{L}\Lambda\P_{L}^{-1}, \qquad \L^{\frac1M} = \P_{R}\Lambda^{-1} \P_{R}^{ -1}.
\end{equation}

Similar to \cite{C},  the BTH can be defined as following.
\begin{definition} \label{deflax}
\rm
%
%
{The  bigraded Toda hierarchy consists of a system of flows given
in the Lax pair formalism by
\begin{eqnarray}
  \label{edef}
\d_{ t_{\gamma, n}} \L = [ A_{\gamma,n}, \L ]
\end{eqnarray}
for $\gamma = N,N-1,N-2, \dots , -M+1$ and $n \geq 0$.  The
operators $A_{\gamma ,n}$ are defined by
\begin{subequations}
\label{Adef}
\begin{align}
  &A_{\gamma,n} =  (\L^{n+1-\frac{\gamma-1}N })_+ \quad \text{for} \quad \gamma = N,N-1, \dots,2, 1,\\
  &A_{\gamma,n} =  -( \L^{n+1+\frac{\gamma}M })_- \quad \text{for} \quad \gamma = 0, -1,\dots, -M+1,
\end{align}
\end{subequations}
and $\d_{ t_{\gamma, n}}$ is defined as  $\frac{\partial }{\partial t_{\gamma, n}}. $}
\end{definition}
The only difference from \cite{C} is that we cancel the extended flows and add the flow when
$\gamma=1$. The flow when
$\gamma=1$ is in fact the Toda hierarchy which is also
the flow when $\gamma=0$.

 Particularly for $N=1=M$, this hierarchy
coincides with the one dimensional Toda hierarchy.
When $N=1,M=2$, the BTH leads the following primary equations
\begin{eqnarray}
\partial_{1,0} \L= [\Lambda +u_0, \L],
\end{eqnarray}
and
\begin{eqnarray}
 \partial_{-1,0} \L= -[e^{(1+\Lambda^{-1})^{-1}\log u_{-2}}\Lambda^{-1} , \L],
\end{eqnarray}
which further lead to
\begin{eqnarray}
\begin{cases}
\partial_{1,0} u_0(x)= u_{-1}(x+\epsilon)-u_{-1}(x),\\
\partial_{1,0} u_{-1}(x)= u_{-2}(x+\epsilon)-u_{-2}(x)+u_{-1}(x)(u_0(x)-u_0(x-\epsilon)),\\
 \partial_{1,0} u_{-2}(x)=u_{-2}(x)( u_0(x)-u_0(x-2\epsilon)),
  \end{cases}
\end{eqnarray}
and
\begin{eqnarray}\label{-1,0flow}
\begin{cases}
\partial_{-1,0} u_0(x)= e^{(1+\Lambda^{-1})^{-1}\log u_{-2}(x+\epsilon)}-e^{(1+\Lambda^{-1})^{-1}\log u_{-2}(x)},\\
\partial_{-1,0} u_{-1}(x)= e^{(1+\Lambda^{-1})^{-1}\log u_{-2}(x)}(u_0(x)-u_0(x-\epsilon)),\\
  \partial_{-1,0} u_{-2}(x)=u_{-1}(x)e^{(1+\Lambda^{-1})^{-1}\log u_{-2}(x-\epsilon)}-e^{(1+\Lambda^{-1})^{-1}\log u_{-2}(x)} u_{-1}(x-\epsilon).
  \end{cases}
\end{eqnarray}
Obviously equation \eqref{-1,0flow} contains infinite multiplication because of nonlocal term $(1+\Lambda^{-1})^{-1}\log u_{-2}(x)$ which comes from the fractional power of the Lax operator.

Set $N=2$ and $M=1$, the equations \eqref{edef} are as follows
\begin{eqnarray}
&& \partial_{2,0} \L= [\Lambda +(1+\La)^{-1}u_1(x), \L],
\\
&&\partial_{1,0} \L= [\Lambda^2+u_1\Lambda +u_0, \L],
\end{eqnarray}
which further lead to the following concrete equations
\begin{eqnarray}\label{N=2,M=12,0flow}
\begin{cases}
 \partial_{2,0} u_1(x)= u_1(x+\epsilon)-u_1(x)+u_1(x)(1-\La)(1+\La)^{-1}u_1(x),\\
\partial_{2,0} u_0(x)= u_{-1}(x+\epsilon)-u_{-1}(x),\\
\partial_{2,0} u_{-1}(x)= u_{-1}(x)(1-\La^{-1})(1+\La)^{-1}u_1(x),
 \end{cases}
\end{eqnarray}

\begin{eqnarray}\label{N=2,M=11,0flow}
\begin{cases}
\partial_{1,0} u_1(x)= u_{-1}(x+2\epsilon)-u_{-1}(x),\\
\partial_{1,0} u_0(x)= u_{-2}(x+2\epsilon)-u_{-2}(x)+u_1(x)u_{-1}(x+\epsilon)-u_{-1}u_1(x-\ep),\\
\partial_{1,0} u_{-1}(x)= u_{-1}(x)(u_0(x)-u_0(x-\epsilon)).
 \end{cases}
\end{eqnarray}
Notice that the nonlocal term $(1+\La)^{-1}u_1(x)$ also comes from the fractional power of Lax operator in equations above. Therefore appearance of nonlocal term is an important property of the BTH.
We can also get more equations when $N$ and  $M$ take other integer values but we shall
not mention them here because our central consideration in
this paper is the Block type additional symmetries of the BTH.

 For the convenience to derive the Sato equations,  the following operators will be
defined as in \cite{C,our paper}:

\begin{eqnarray}
  B_{\gamma , n} :=
\begin{cases}
  \L^{n+1-\frac{\gamma-1}{N}} &\mbox{for \ }\gamma=N\dots 1,\\
  \L^{n+1+\frac{\gamma}{M}} &
  \mbox{for \ }\gamma = 0\dots -M+1.
  \end{cases}
\end{eqnarray}
Before introducing the Sato equation, the following proposition \cite{C} need to be given firstly.
\begin{proposition} \label{lemD1}
The following two identities hold
\begin{align}
  \label{d6}
   & \d_{t_{\gamma,n}}\L^{\frac1N} = [ - (B_{\gamma,n})_-, \L^{\frac1N} ], \\
  \label{d6i} &\d_{t_{\gamma,n}}\L^{\frac1M} = [ (B_{\gamma,n})_+, \L^{\frac1M} ].
\end{align}
\end{proposition}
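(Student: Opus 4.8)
The plan is to deduce both identities from the evolution equations of the dressing operators $\P_L$ and $\P_R$, rather than by differentiating $\L=(\L^{\frac1N})^N$ directly; as I explain in the last paragraph, the latter route fails for shift operators. First I would make the two branches of \eqref{Adef} uniform. Since $B_{\gamma,n}$, $\L^{\frac1N}$ and $\L^{\frac1M}$ are all (fractional) powers of the single operator $\L$, they commute with one another; in particular $[B_{\gamma,n},\L]=0$, so $[(B_{\gamma,n})_+,\L]=-[(B_{\gamma,n})_-,\L]$. Hence, whether $A_{\gamma,n}$ is given by the first or the second line of \eqref{Adef}, the Lax equation \eqref{edef} can be rewritten in the single form
\begin{equation*}
\d_{t_{\gamma,n}}\L=[(B_{\gamma,n})_+,\L]=-[(B_{\gamma,n})_-,\L].
\end{equation*}
This already isolates $(B_{\gamma,n})_+$ and $-(B_{\gamma,n})_-$ as the natural candidates for the right-hand sides of \eqref{d6} and \eqref{d6i}.

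Next I would establish the Sato equations $\d_{t_{\gamma,n}}\P_L=-(B_{\gamma,n})_-\,\P_L$ and $\d_{t_{\gamma,n}}\P_R=(B_{\gamma,n})_+\,\P_R$. Differentiating the first identity in \eqref{constraint lax}, namely $\L=\P_L\Lambda^N\P_L^{-1}$, gives $\d_{t_{\gamma,n}}\L=[(\d_{t_{\gamma,n}}\P_L)\P_L^{-1},\L]$, so that $(\d_{t_{\gamma,n}}\P_L)\P_L^{-1}$ agrees with $A_{\gamma,n}$ modulo an operator commuting with $\L$. The ingredient that removes this ambiguity is the order (top power of $\Lambda$): because $\P_L=1+w_1\Lambda^{-1}+\ldots$ has constant leading coefficient, $(\d_{t_{\gamma,n}}\P_L)\P_L^{-1}$ must have order $\le -1$, and the choice $\d_{t_{\gamma,n}}\P_L=-(B_{\gamma,n})_-\P_L$ both has this order and is consistent with the flow, since $[-(B_{\gamma,n})_-,\L]=\d_{t_{\gamma,n}}\L$ guarantees that $\P_L$ remains a dressing of the evolving $\L$ and retains its normalized form. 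Dually, differentiating $\L=\P_R\Lambda^{-M}\P_R^{-1}$ together with the fact that $\P_R=\tilde{w}_0+\tilde{w}_1\Lambda+\ldots$ contains only non-negative powers of $\Lambda$ forces $(\d_{t_{\gamma,n}}\P_R)\P_R^{-1}$ to have order $\ge 0$, which selects $\d_{t_{\gamma,n}}\P_R=(B_{\gamma,n})_+\P_R$.

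Granting these, the proposition is a one-line computation. Differentiating $\L^{\frac1N}=\P_L\Lambda\P_L^{-1}$ from \eqref{frac} and using $\d_{t_{\gamma,n}}(\P_L^{-1})=-\P_L^{-1}(\d_{t_{\gamma,n}}\P_L)\P_L^{-1}$ yields
\begin{equation*}
\d_{t_{\gamma,n}}\L^{\frac1N}=[(\d_{t_{\gamma,n}}\P_L)\P_L^{-1},\L^{\frac1N}]=[-(B_{\gamma,n})_-,\L^{\frac1N}],
\end{equation*}
which is \eqref{d6}; the identical manipulation applied to $\L^{\frac1M}=\P_R\Lambda^{-1}\P_R^{-1}$ gives \eqref{d6i}.

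I expect the main obstacle to be the middle step, that is, the passage from the Lax equation for $\L$ to the evolution of the dressing operators: one must verify that prescribing $\d_{t_{\gamma,n}}\P_L=-(B_{\gamma,n})_-\P_L$ keeps $\P_L$ a genuine dressing of the Lax-evolving $\L$ and preserves the normalization $\P_L=1+w_1\Lambda^{-1}+\ldots$ (and dually for $\P_R$). It is essential to argue through the dressing operators rather than ``by uniqueness of the $N$-th root'': for shift operators the map $X\mapsto\sum_{j=0}^{N-1}(\L^{\frac1N})^j X(\L^{\frac1N})^{N-1-j}$ obtained by differentiating $(\L^{\frac1N})^N=\L$ has a nontrivial kernel, since its leading coefficient only imposes the functional equation $\sum_{j=0}^{N-1}r(x+j\epsilon)=0$ rather than $r=0$. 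Thus $\L^{\frac1N}$ is genuinely pinned down by the dressing $\P_L$ and not by $\L$ alone, which is precisely why the Sato equations carry the real content of the proof.
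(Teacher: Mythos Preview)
Your argument reverses the paper's logical order: here Proposition~\ref{lemD1} is established first (by reference to \cite{C} and \cite{our paper}) and is then used to obtain the Sato equations of Theorem~\ref{t1}, not conversely. The argument in those references is the direct one you dismiss: set $Y=\partial_{t_{\gamma,n}}\L^{\frac1N}+[(B_{\gamma,n})_-,\L^{\frac1N}]$, observe that $Y$ has order $\le0$, differentiate $(\L^{\frac1N})^N=\L$ to obtain $\sum_{j=0}^{N-1}(\L^{\frac1N})^{j}Y(\L^{\frac1N})^{N-1-j}=0$, and read off $Y=0$ degree by degree.

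Your claim that this linearization has a nontrivial kernel is where the proposal goes wrong. At top order the vanishing condition is indeed $\sum_{j=0}^{N-1}r(x+j\epsilon)=0$, but in the paper's formal setting $1+\Lambda+\cdots+\Lambda^{N-1}$ is invertible on the coefficient functions: this very invertibility is what lets one solve recursively for the coefficients $a_k$ of $\L^{\frac1N}$ and hence define the fractional power at all, and the paper uses such inverses explicitly (e.g.\ the nonlocal term $(1+\Lambda)^{-1}u_1$ in \eqref{N=2,M=12,0flow}). So the kernel is trivial and the direct proof goes through. Your own detour is not independent of this point either. The order bound does not force $(\partial_{t_{\gamma,n}}\P_L)\P_L^{-1}=-(B_{\gamma,n})_-$, since the negative fractional powers $\L^{-k/N}$ lie in the centralizer of $\L$ with order $\le-1$; what you are really doing is \emph{choosing} the gauge of $\P_L$ so that the Sato equation holds, after which you still need uniqueness of the $N$-th root to identify $\P_L\Lambda\P_L^{-1}$ with the $\L^{\frac1N}$ of the statement---precisely the injectivity you disclaimed.
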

\begin{proof} See \cite{C,our paper}.\end{proof}

 Using the proposition above, one can obtain the following
proposition, lemma and theorem, which are results of \cite{C, our
paper}.

\begin{proposition} \label{propZS}
If $L$ satisfies the Lax equation \eqref{edef}, then we have the following
Zakharov-Shabat equation
\begin{eqnarray}
\label{zs}  \d_{t_{\beta,n}}(A_{\alpha,m}) - \d_{t_{\alpha,m}} (A_{\beta,
n})+ [ A_{\alpha,m} , A_{\beta,n} ] =0,
\end{eqnarray}
for $-M+1 \leq \alpha, \beta \leq N$ ,  $m, n \geq 0$.
\end{proposition}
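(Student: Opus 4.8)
The plan is to read \eqref{zs} as the zero-curvature (compatibility) condition attached to the Lax pair \eqref{edef}, and to prove it by computing the left-hand side directly and showing it collapses to zero, using only the Sato equations of Proposition \ref{lemD1}, the definitions \eqref{Adef} of $A_{\gamma,n}$ through $B_{\gamma,n}$, and the elementary algebra of the projections $(\cdot)_+$, $(\cdot)_-$. First I would record how the \emph{full} operators $B_{\gamma,n}$ evolve. Since $B_{\alpha,m}$ is an integer power of $\L^{\frac1N}$ when $\alpha\ge1$ and of $\L^{\frac1M}$ when $\alpha\le0$, the Leibniz rule applied to \eqref{d6}, \eqref{d6i} gives
\begin{align*}
\d_{t_{\beta,n}}B_{\alpha,m}&=[-(B_{\beta,n})_-,\,B_{\alpha,m}]\quad(\alpha\ge1),\\
\d_{t_{\beta,n}}B_{\alpha,m}&=[\,(B_{\beta,n})_+,\,B_{\alpha,m}]\quad(\alpha\le0).
\end{align*}
Because $\d_{t_{\beta,n}}$ commutes with the projections, this yields $\d_{t_{\beta,n}}A_{\alpha,m}=(\d_{t_{\beta,n}}B_{\alpha,m})_+$ for $\alpha\ge1$ and $\d_{t_{\beta,n}}A_{\alpha,m}=-(\d_{t_{\beta,n}}B_{\alpha,m})_-$ for $\alpha\le0$.

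Next I would split according to the signs of $\alpha$ and $\beta$. When $\alpha,\beta$ have the same sign, $B_{\alpha,m}$ and $B_{\beta,n}$ are powers of one and the same fractional root, hence they commute and their product is well defined; the computation is then the usual Toda-type manipulation. Substituting the evolution laws, writing each $B=(B)_++(B)_-$, using $[B_{\alpha,m},B_{\beta,n}]=0$ to trade $(\cdot)_+$–$(\cdot)_-$ commutators, and discarding the pieces that are purely of one sign, the whole expression reduces to $([(B_{\alpha,m})_+,(B_{\beta,n})_+])_-=0$ (when $\alpha,\beta\ge1$) or to $([(B_{\alpha,m})_-,(B_{\beta,n})_-])_+=0$ (when $\alpha,\beta\le0$), since a product of two non-negative (resp. two negative) band operators stays non-negative (resp. negative).

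The delicate case, which I expect to be the main obstacle, is the mixed one, say $\alpha\ge1$, $\beta\le0$. Here $B_{\alpha,m}$ is a series in descending powers of $\Lambda$ while $B_{\beta,n}$ is a series in ascending powers, so the product $B_{\alpha,m}B_{\beta,n}$ is ill defined (each coefficient would be an infinite sum) and one cannot simply invoke commutativity of the two fractional roots. The way around this is to notice that $A_{\alpha,m}=(B_{\alpha,m})_+$ and $A_{\beta,n}=-(B_{\beta,n})_-$ are both genuine finite-band difference operators, and to keep every commutator anchored on these finite operators. Using $-(B_{\beta,n})_-=A_{\beta,n}$ one gets $\d_{t_{\beta,n}}B_{\alpha,m}=[A_{\beta,n},B_{\alpha,m}]$; projecting and deleting the term $([A_{\beta,n},(B_{\alpha,m})_-])_+$ (negative times negative, hence negative) gives
\[
\d_{t_{\beta,n}}A_{\alpha,m}=([A_{\beta,n},A_{\alpha,m}])_+ .
\]
Symmetrically, from $\d_{t_{\alpha,m}}B_{\beta,n}=[A_{\alpha,m},B_{\beta,n}]$, projecting and killing the non-negative term $([A_{\alpha,m},(B_{\beta,n})_+])_-$ gives
\[
\d_{t_{\alpha,m}}A_{\beta,n}=([A_{\alpha,m},A_{\beta,n}])_- .
\]
Substituting both identities into the left-hand side of \eqref{zs} produces $-([A_{\alpha,m},A_{\beta,n}])_+-([A_{\alpha,m},A_{\beta,n}])_-+[A_{\alpha,m},A_{\beta,n}]=0$. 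The remaining mixed case $\alpha\le0,\ \beta\ge1$ then follows at once, since the left-hand side of \eqref{zs} is antisymmetric under the exchange $(\alpha,m)\leftrightarrow(\beta,n)$.

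In summary, the only genuine work is the bookkeeping of which fractional root each $B_{\gamma,n}$ belongs to, together with, in the mixed case, the discipline of never forming the ill-defined product of the two opposite completions but instead reducing every commutator to one involving the finite operators $A_{\gamma,n}$; the same-sign cases are the classical projection argument. I would therefore present the proof in three blocks (both $\ge1$, both $\le0$, and mixed), with the mixed block isolated as the place where the structural feature of the BTH—the incompatibility of the $\L^{\frac1N}$ and $\L^{\frac1M}$ expansions—is handled.
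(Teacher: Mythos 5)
Your proposal is correct, and it is worth noting that the paper itself offers no proof of Proposition \ref{propZS} at all: it simply attributes the result to the references \cite{C} and \cite{our paper} (the text says the proposition ``can be obtained'' from Proposition \ref{lemD1}). So your argument fills a gap the paper leaves open, and it does so along the standard lines of those references: derive $\d_{t_{\beta,n}}B_{\alpha,m}=[-(B_{\beta,n})_-,B_{\alpha,m}]$ (for $\alpha\ge1$) and $\d_{t_{\beta,n}}B_{\alpha,m}=[(B_{\beta,n})_+,B_{\alpha,m}]$ (for $\alpha\le0$) from \eqref{d6}, \eqref{d6i} by Leibniz, then split into sign cases and use projection algebra. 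I checked the mixed case ($\alpha\ge1$, $\beta\le0$) in detail: the two identities $\d_{t_{\beta,n}}A_{\alpha,m}=([A_{\beta,n},A_{\alpha,m}])_+$ and $\d_{t_{\alpha,m}}A_{\beta,n}=([A_{\alpha,m},A_{\beta,n}])_-$ are exactly right (the discarded terms $([A_{\beta,n},(B_{\alpha,m})_-])_+$ and $([A_{\alpha,m},(B_{\beta,n})_+])_-$ do vanish for the sign reasons you give), and summing yields $-[A_{\alpha,m},A_{\beta,n}]_+-[A_{\alpha,m},A_{\beta,n}]_-+[A_{\alpha,m},A_{\beta,n}]=0$; the remaining mixed case follows by the antisymmetry you invoke. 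Your emphasis on never forming $B_{\alpha,m}B_{\beta,n}$ in the mixed case is precisely the BTH-specific subtlety (the two fractional powers live in incompatible completions), and keeping every commutator anchored on the finite operators $A_{\gamma,n}$ is the right discipline. One cosmetic quibble: in the same-sign cases, the computation does not literally ``reduce to'' the single identity $([(B_{\alpha,m})_+,(B_{\beta,n})_+])_-=0$; rather that fact, together with $([(B_{\alpha,m})_-,(B_{\beta,n})_-])_+=0$ and the commutativity $[B_{\alpha,m},B_{\beta,n}]=0$, are the three ingredients whose combination kills all terms — but carrying out the bookkeeping with exactly those ingredients does close the case, so this is a matter of phrasing, not a gap.
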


 Using the Zakharov-Shabat equation (\ref{zs}) one can obtain the following lemma.
\begin{lemma} \label{zs corollary1} {\rm~(\cite{our paper})~}
The following Zakharov-Shabat equations hold
\begin{align}\label{compatible zero curvature}
 &\partial_{\beta,n}(B_{\alpha,m})_{-} - \partial_{\alpha,m}(B_{\beta,
n})_{-} - [ (B_{\alpha,m})_{-} , (B_{\beta,n})_{-} ] =
0,\\
 &-\partial_{\beta,n}(B_{\alpha,m})_{+} + \partial_{\alpha,m}(B_{\beta,
n})_{+} - [ (B_{\alpha,m})_{+} , (B_{\beta,n})_{+} ] =0,
\end{align}
where, $-M+1\leq\alpha,\beta\leq N$,  $m,n \geq 0$.
\end{lemma}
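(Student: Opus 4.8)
The plan is to reduce both identities to the single-operator evolution laws \eqref{d6}--\eqref{d6i} of Proposition \ref{lemD1}, promoted from the fractional powers $\L^{\frac1N},\L^{\frac1M}$ to the operators $B_{\alpha,m}$ themselves. First I would note that, by definition, $B_{\alpha,m}$ is a power of $\L^{\frac1N}$ when $\alpha\in\{1,\dots,N\}$ and a power of $\L^{\frac1M}$ when $\alpha\in\{0,\dots,-M+1\}$. Applying $\partial_{\beta,n}$ and distributing it across the power by the Leibniz rule, \eqref{d6}--\eqref{d6i} upgrade to
\begin{align*}
\partial_{\beta,n}B_{\alpha,m}&=-[(B_{\beta,n})_-,B_{\alpha,m}], \quad \alpha\in\{1,\dots,N\},\\
\partial_{\beta,n}B_{\alpha,m}&=[(B_{\beta,n})_+,B_{\alpha,m}], \quad \alpha\in\{0,\dots,-M+1\},
\end{align*}
valid for \emph{every} $\beta$ and $n$, since the derivation only uses how $\partial_{\beta,n}$ acts on the fractional root.

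Next I would pass to the $\pm$ projections using two elementary facts: (i) the projections $(\cdot)_{\pm}$ commute with $\partial_{\beta,n}$, because the time derivative acts coefficientwise and does not shift the power of $\Lambda$; and (ii) $[X_+,Y_+]_-=0$ and $[X_-,Y_-]_+=0$ for any difference operators $X,Y$, since a commutator of two non-negative (resp. negative) parts again has only non-negative (resp. negative) powers of $\Lambda$. As a convenient shortcut, substituting \eqref{Adef} into the Zakharov--Shabat equation \eqref{zs} reproduces the first identity verbatim when $\alpha,\beta\in\{0,\dots,-M+1\}$ and the second when $\alpha,\beta\in\{1,\dots,N\}$, so those ``matching-type'' cases need no further work. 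For the remaining same-type case of each identity I would apply the upgraded evolution laws, take the relevant projection, and use the commutativity $[B_{\alpha,m},B_{\beta,n}]=0$, which holds automatically because both operators are then powers of the \emph{same} root; writing $P=B_{\alpha,m}$, $Q=B_{\beta,n}$, a short computation collapses $\partial_{\beta,n}P_- -\partial_{\alpha,m}Q_- -[P_-,Q_-]$ to $-([P_+,Q_+])_-=0$, and symmetrically for the positive-part identity via $[X_-,Y_-]_+=0$.

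I expect the main obstacle to be the \emph{mixed} case, where $P$ is a power of $\L^{\frac1N}$ and $Q$ a power of $\L^{\frac1M}$: here $[P,Q]$ need not vanish, so the commutativity lever used above is unavailable. The key point to verify carefully is that the two evolution laws for $P$ and $Q$ are precisely dual under the $\pm$ splitting, so that the cross-commutators $[(B_{\beta,n})_-,(B_{\alpha,m})_+]$ entering $\partial_{\beta,n}(B_{\alpha,m})_-$ and $[(B_{\alpha,m})_+,(B_{\beta,n})_-]$ entering $\partial_{\alpha,m}(B_{\beta,n})_-$ are negatives of one another and cancel after projection, while the purely negative terms $[(B_{\beta,n})_-,(B_{\alpha,m})_-]$ and $[(B_{\alpha,m})_-,(B_{\beta,n})_-]$ cancel against each other and against $[P_-,Q_-]$ — all without any appeal to $[P,Q]=0$. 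Confirming this exact cancellation, and checking that it runs identically for both orderings of the mixed pair and for both the negative- and positive-part identities, is the crux of the argument.
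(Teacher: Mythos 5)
Your proposal is correct, and it supplies a proof where the paper gives none: Lemma \ref{zs corollary1} is simply quoted from \cite{our paper}, preceded only by the remark that it follows from the Zakharov--Shabat equation \eqref{zs}. Your argument is a valid, self-contained derivation that goes beyond that hint. Substituting \eqref{Adef} into \eqref{zs} does reproduce the first identity verbatim for $\alpha,\beta\in\{0,\dots,-M+1\}$ and the second for $\alpha,\beta\in\{1,\dots,N\}$; the Leibniz upgrade of \eqref{d6}--\eqref{d6i} to $\partial_{\beta,n}B_{\alpha,m}=[-(B_{\beta,n})_-,B_{\alpha,m}]$ (resp.\ $[(B_{\beta,n})_+,B_{\alpha,m}]$) is legitimate because $B_{\alpha,m}$ is a positive integer power of $\L^{\frac1N}$ (resp.\ $\L^{\frac1M}$); the same-type cases do collapse, via $[B_{\alpha,m},B_{\beta,n}]=0$, to $-([P_+,Q_+])_-=0$ and $([P_-,Q_-])_+=0$; and the mixed-case cancellation you predict does occur, for both orderings and both identities, with no appeal to $[P,Q]=0$. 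For instance, with $P=B_{\alpha,m}$, $Q=B_{\beta,n}$, $\alpha\in\{1,\dots,N\}$, $\beta\in\{0,\dots,-M+1\}$, one finds $\partial_{\beta,n}P_-=-([Q_-,P_+])_--[Q_-,P_-]$ and $\partial_{\alpha,m}Q_-=([P_+,Q_-])_-$, and the first identity vanishes term by term. The only blemish is in your prose for this mixed case: only one of the two evolutions produces a purely negative commutator --- the term $[(B_{\alpha,m})_-,(B_{\beta,n})_-]$ you list never arises, since $\partial_{\alpha,m}Q$ involves only $(B_{\alpha,m})_+$ --- so the final cancellation is between $-[Q_-,P_-]=[P_-,Q_-]$ and the explicit $-[P_-,Q_-]$, two terms rather than three. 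This slip does not affect validity; what your route buys, compared with the paper's bare citation, is an explicit verification covering all four type-combinations of $(\alpha,\beta)$ and both identities.
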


Using Lemma \ref{zs corollary1} and the Lax equation, one can then
obtain the following theorem.

\begin{theorem}
\label{t1} {\rm~(\cite{our paper})~} $L$ is a solution to the BTH
if and only if there is a pair of dressing operators $\P_L$ and
$\P_R$, which satisfy the following Sato  equations:
{\allowdisplaybreaks}
\begin{eqnarray}
\label{bn1}
\d_{\gamma,n}\P_L & =&- (B_{\gamma,n})_-  \P_L, \\
\label{bn1'} \d_{\gamma,n}\P_R & =& (B_{\gamma ,n})_+\P_R,
\end{eqnarray}
where, $-M+1\leq\gamma\leq N$,  $n \geq 0.$
\end{theorem}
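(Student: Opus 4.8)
The statement is an equivalence, so the plan is to treat the two implications separately, the nontrivial content lying in recovering the Sato equations from the Lax equations. Throughout, the one structural fact I would lean on is that each $B_{\gamma,n}$ is by definition a (fractional) power of $\L$, so $[B_{\gamma,n},\L]=0$ and hence $[(B_{\gamma,n})_+,\L]=-[(B_{\gamma,n})_-,\L]$. This single identity is what reconciles the two clauses in the definition \eqref{Adef} of $A_{\gamma,n}$.

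For the implication ``Sato $\Rightarrow$ Lax'' I would simply differentiate the dressing relation $\L=\P_L\Lambda^N\P_L^{-1}$ along $\d_{\gamma,n}$, using $\d_{\gamma,n}(\P_L^{-1})=-\P_L^{-1}(\d_{\gamma,n}\P_L)\P_L^{-1}$ and substituting \eqref{bn1}. The two resulting terms collapse to $-(B_{\gamma,n})_-\L+\L(B_{\gamma,n})_-=[\L,(B_{\gamma,n})_-]$, and the identity above rewrites this as $[(B_{\gamma,n})_+,\L]=-[(B_{\gamma,n})_-,\L]$. Comparing with \eqref{Adef} shows this equals $[A_{\gamma,n},\L]$ in both ranges $\gamma\ge1$ and $\gamma\le0$, which is exactly \eqref{edef}. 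Starting instead from $\L=\P_R\Lambda^{-M}\P_R^{-1}$ and \eqref{bn1'} reproduces the identical Lax equation, so the two Sato equations are mutually consistent.

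For the converse I would set $Q_L:=(\d_{\gamma,n}\P_L)\P_L^{-1}$ and $Q_R:=(\d_{\gamma,n}\P_R)\P_R^{-1}$ for a fixed dressing pair. The normalizations \eqref{dressP} and \eqref{dressQ} (leading coefficient $1$ for $\P_L$, lowest order $0$ for $\P_R$) force $Q_L$ to have order $\le-1$ and $Q_R$ to have order $\ge0$. Differentiating the two dressing relations and invoking \eqref{edef} gives $[Q_L,\L]=[Q_R,\L]=[A_{\gamma,n},\L]$, whence the remainders $Y_-:=Q_L+(B_{\gamma,n})_-$ (of order $\le-1$) and $Y_+:=Q_R-(B_{\gamma,n})_+$ (of order $\ge0$) both commute with $\L$. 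If I can show $Y_-=Y_+=0$, then by the very definitions of $Y_\pm$ I obtain $Q_L=-(B_{\gamma,n})_-$ and $Q_R=(B_{\gamma,n})_+$, i.e. \eqref{bn1}--\eqref{bn1'}; equivalently one may note $Q_R-Q_L=B_{\gamma,n}$ and read off the two equations by splitting into non-negative and negative parts.

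The hard part is precisely this vanishing. A difference operator of order $\le-1$ commuting with $\L$ need \emph{not} be zero (for instance $\L^{-1/N}$ is such an operator), so the remainders cannot be eliminated by a naive centralizer argument, and the dressing freedom (right multiplication by $x$-independent operators $1+a_1\Lambda^{-1}+\cdots$ and $\tilde a_0+\tilde a_1\Lambda+\cdots$) must enter. The clean way I would resolve it is to run the construction in the opposite direction: fix $\L(t)$ solving \eqref{edef}, choose any dressing pair at one value of the flow variables, and \emph{define} $\P_L,\P_R$ at all times as the solution of the overdetermined Sato system \eqref{bn1}--\eqref{bn1'}. Its compatibility is exactly the Zakharov--Shabat content of Lemma \ref{zs corollary1}, so the system is integrable; the already-proved direction then shows the resulting dressings produce a Lax operator solving \eqref{edef} with the same initial data, and uniqueness of the Lax flow identifies it with $\L(t)$. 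I expect this compatibility-plus-uniqueness step to be the main obstacle, since it is where the full zero-curvature structure of all flows, rather than a single one, is genuinely required.
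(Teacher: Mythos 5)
Your route is, in substance, the same as the paper's: the paper itself gives no written proof (it defers to \cite{our paper}), but it states that the theorem follows from Lemma \ref{zs corollary1} together with the Lax equation, which is exactly the compatibility-plus-integration scheme you propose for the hard direction. Your easy direction (differentiate the dressing relations and use $[B_{\gamma,n},\L]=0$ to reconcile the two clauses of \eqref{Adef}) is correct, and your diagnosis of the real obstacle in the converse---that the centralizer of $\L$ contains nonzero operators of order $\le -1$, e.g.\ $\L^{-1/N}=\P_L\Lambda^{-1}\P_L^{-1}$, so the remainders $Y_\pm$ cannot be killed by a pointwise argument---is also correct.

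The one step that does not work as literally written is your final identification. When you integrate the Sato system from initial data, the coefficients $(B_{\gamma,n})_\pm$ must be built from the \emph{given} $\L(t)$; otherwise the system is nonlinear in $\P_L,\P_R$ and Lemma \ref{zs corollary1} is not its compatibility condition. But then the ``already-proved direction'' does not apply to $\tilde\L:=\P_L\Lambda^N\P_L^{-1}$, because that direction needs the Sato equations with $B_{\gamma,n}$ built from $\tilde\L$ itself, whereas your integrated $\P_L$ satisfies them with $B_{\gamma,n}$ built from $\L$; for the same reason you cannot yet invoke ``uniqueness of the Lax flow,'' since it is not yet known that $\tilde\L$ solves \eqref{edef} at all. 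The repair is one line: by construction $\d_{\gamma,n}\tilde\L=[-(B_{\gamma,n})_-,\tilde\L]$, while the given $\L$ satisfies the \emph{same} linear equation $\d_{\gamma,n}\L=[-(B_{\gamma,n})_-,\L]$, because $[A_{\gamma,n},\L]=[-(B_{\gamma,n})_-,\L]$ in both ranges of $\gamma$ thanks to $[B_{\gamma,n},\L]=0$. Two solutions of one and the same linear evolution with equal initial value coincide, so $\tilde\L\equiv\L$ (and likewise $\P_R\Lambda^{-M}\P_R^{-1}\equiv\L$ via the $(B_{\gamma,n})_+$ equation); a posteriori the Sato equations then hold with $B_{\gamma,n}$ built from the dressed operator, which is the assertion of the theorem. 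With this substitution---uniqueness for the linear evolution rather than for the nonlinear Lax flow---your argument is complete; one should also record, as you implicitly do, that $-(B_{\gamma,n})_-\P_L$ has order $\le -1$ and $(B_{\gamma,n})_+\P_R$ has only non-negative powers of $\Lambda$, so the integrated operators retain the normalized forms \eqref{dressP}--\eqref{dressQ} and remain admissible dressing operators.
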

The dressing operators satisfying Sato equations \eqref{bn1} and \eqref{bn1'} will be called wave operators later.

After the preparation above, it is time to introduce  Orlov-Schulman's operators which is included in the next section.

\section{Orlov-Schulman's $M_L$, $M_R$ operators}
 In order to give the additional symmetries of the BTH, we define the Orlov-Schulman's $M_L$, $M_R$
 operators
 by
\begin{eqnarray}\label{Moperator}
&&M_L=\P_L\Gamma_L \P_L^{-1}, \ \ \ \ \ \ \ M_R=\P_R\Gamma_R
\P_R^{-1},
\end{eqnarray}
where
\begin{eqnarray}
 && \Gamma_L=
\frac{x}{N\epsilon}\Lambda^{-N}+\sum_{n\geq 0}\sum_
{\alpha=1}^{N}(n+1 -
\frac{\alpha-1}{N} )
\Lambda^{N({n-\frac{\alpha-1}{N}})}t_{\alpha, n},
\\
&&
\Gamma_R=-\frac{x}{M\epsilon}\Lambda^M-\sum_{n\geq 0}\sum_
{\beta=-M+1}^{0}(n+1 +
\frac{\beta}{M} )
\Lambda^{-M({n+\frac{\beta}{M}})}t_{\beta, n}.
\end{eqnarray}
A direct calculation shows that the operators $M_L$ and $M_R$
satisfy the following theorem.
\begin{proposition}\label{flowsofM} Operators $\L$,$M_L$ and $M_R$
satisfy the following identities
\begin{eqnarray}\label{LM=1}
[\L,M_L]=1,&&[\L,M_R]=1,\\\label{LM time} \partial_{ t_{\gamma,n}}M_L=
[A_{\gamma,n},M_L],& &\partial_{ t_{\gamma,n}}M_R=[A_{\gamma,n},M_R],
\\\label{LMmix t}
\dfrac{\partial M_L^n\L^k}{\partial{t_{\gamma,n}}}=[A_{\gamma,n},
M_L^n\L^k],&& \dfrac{\partial
M_R^n\L^k}{\partial{t_{\gamma,n}}}=[A_{\gamma,n}, M_R^n\L^k],
\end{eqnarray}
where,
$ -M+1\leq \gamma \leq N, n\geq 0.$
\end{proposition}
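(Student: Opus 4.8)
The plan is to establish the three groups of identities in \eqref{LM=1}, \eqref{LM time}, and \eqref{LMmix t} by systematically dressing back the corresponding identities for the bare operators $\Lambda^{\pm1}$ and $\Gamma_L, \Gamma_R$, and then combining them. The key observation is that $M_L$ and $M_R$ are defined by conjugating $\Gamma_L$ and $\Gamma_R$ with the wave operators $\P_L$ and $\P_R$, exactly as $\L^{1/N} = \P_L \Lambda \P_L^{-1}$ and $\L^{1/M} = \P_R \Lambda^{-1}\P_R^{-1}$ in \eqref{frac}. Therefore any algebraic relation holding between the bare operators is transported, unchanged, to a relation between the dressed operators, provided the commutator structure is preserved under conjugation (which it always is: $\P(AB - BA)\P^{-1} = (\P A \P^{-1})(\P B \P^{-1}) - (\P B \P^{-1})(\P A \P^{-1})$).

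First I would prove \eqref{LM=1}. For the left side, I compute $[\L, M_L] = \P_L[\Lambda^N, \Gamma_L]\P_L^{-1}$ since $\L = \P_L \Lambda^N \P_L^{-1}$. The only term in $\Gamma_L$ that fails to commute with $\Lambda^N$ is the $\frac{x}{N\epsilon}\Lambda^{-N}$ term, because $x$ is the continuous variable on which $\Lambda = e^{\epsilon\partial_x}$ acts as a shift; all the other terms are constant-coefficient powers of $\Lambda$ and commute with $\Lambda^N$. A short calculation with the Heisenberg-type relation $[\Lambda^N, x] = N\epsilon\,\Lambda^N$ (from $\Lambda^N f(x) = f(x+N\epsilon)\Lambda^N$) yields $[\Lambda^N, \frac{x}{N\epsilon}\Lambda^{-N}] = 1$, so $[\L,M_L]=\P_L \cdot 1 \cdot \P_L^{-1} = 1$. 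The computation for $[\L,M_R]=1$ is entirely parallel using $\L=\P_R\Lambda^{-M}\P_R^{-1}$ and $[\Lambda^{-M}, -\frac{x}{M\epsilon}\Lambda^M]=1$.

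Next I would turn to the flow equations \eqref{LM time}, which is the step I expect to be the main obstacle, since here the $t_{\gamma,n}$-dependence of both $\P_L$ and $\Gamma_L$ must be tracked simultaneously. Differentiating $M_L = \P_L \Gamma_L \P_L^{-1}$ gives three contributions: $(\partial_{t_{\gamma,n}}\P_L)\Gamma_L\P_L^{-1}$, $\P_L(\partial_{t_{\gamma,n}}\Gamma_L)\P_L^{-1}$, and $\P_L\Gamma_L\,\partial_{t_{\gamma,n}}(\P_L^{-1})$. Using the Sato equation \eqref{bn1}, namely $\partial_{\gamma,n}\P_L = -(B_{\gamma,n})_-\P_L$ and hence $\partial_{\gamma,n}\P_L^{-1} = \P_L^{-1}(B_{\gamma,n})_-$, the first and third terms combine into $[-(B_{\gamma,n})_-, M_L]$. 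The genuinely delicate part is the middle term: I must compute $\partial_{t_{\gamma,n}}\Gamma_L$ explicitly from the defining sum and verify that, after dressing, it exactly accounts for the difference $[A_{\gamma,n}, M_L] - [-(B_{\gamma,n})_-, M_L] = [A_{\gamma,n}+(B_{\gamma,n})_-, M_L]$. Since $A_{\gamma,n} = (B_{\gamma,n})_+$ for $\gamma\ge1$, the required match reduces to checking that $\P_L(\partial_{t_{\gamma,n}}\Gamma_L)\P_L^{-1} = [B_{\gamma,n}, M_L]$, and the explicit coefficient $(n+1-\frac{\alpha-1}{N})$ in $\Gamma_L$ is precisely engineered so that $\partial_{t_{\gamma,n}}\Gamma_L = [\Lambda^{N(n+1-\frac{\gamma-1}{N})}, \Gamma_L]$ at the bare level; dressing this identity gives $\P_L\,\partial_{t_{\gamma,n}}\Gamma_L\,\P_L^{-1}=[\L^{n+1-\frac{\gamma-1}{N}},M_L]=[B_{\gamma,n},M_L]$, which closes the computation. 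The case $\gamma\le0$ is handled symmetrically with $\P_R$, $\Gamma_R$, and \eqref{bn1'}.

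Finally, \eqref{LMmix t} follows formally from the Leibniz rule once \eqref{LM time} and the Lax equation \eqref{edef} are in hand. Writing $\partial_{t_{\gamma,n}}(M_L^n\L^k)$ as a sum of terms in which the derivative falls on one factor at a time, each derivative produces a commutator with $A_{\gamma,n}$ by \eqref{LM time} and \eqref{edef}, and the telescoping $\sum [A_{\gamma,n}, \cdot]$ reassembles into the single commutator $[A_{\gamma,n}, M_L^n\L^k]$ because the derivation $[A_{\gamma,n}, -]$ obeys the Leibniz rule. The same argument applies verbatim to $M_R^n\L^k$. Thus the entire proposition reduces to the two bare-operator computations in the first two paragraphs, with the flow identity \eqref{LM time} being the crux.
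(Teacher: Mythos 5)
Your proposal is correct and follows essentially the same route as the paper: both prove \eqref{LM=1} by dressing the bare identities $[\Lambda^N,\Gamma_L]=1$ and $[\Lambda^{-M},\Gamma_R]=1$, both prove \eqref{LM time} by combining the Sato equations \eqref{bn1}, \eqref{bn1'} with the bare relation $\partial_{t_{\alpha,n}}\Gamma_L=[\Lambda^{N(n+1-\frac{\alpha-1}{N})},\Gamma_L]$ (the paper phrases this as dressing the bracket $[\partial_{t_{\alpha,n}}-\Lambda^{N(n+1-\frac{\alpha-1}{N})},\Gamma_L]=0$, which is just the conjugated form of your Leibniz expansion), and both obtain \eqref{LMmix t} from \eqref{LM time}, the Lax equation and the Leibniz rule. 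The only cosmetic difference is organizational: the paper conjugates the full bracket containing $\partial_t$ as an operator, while you expand the derivative of $\P_L\Gamma_L\P_L^{-1}$ term by term, which amounts to the same computation.
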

\begin{proof}
Firstly we prove \eqref{LM=1} by dressing the following two identities using $\P_L$ and $\P_R$ separately

\begin{eqnarray*}[\Lambda^{N},\Gamma_L]&=&[\Lambda^{N}, \frac{x}{N\epsilon}\Lambda^{-N}+\sum_{n\geq 0}\sum_
{\alpha=1}^{N}(n+1-\frac{\alpha-1}{N}) \Lambda^{N({n-\frac{\alpha-1}{N}})}t_{\alpha,
n}]\\&=&1,
\\[5pt]
%
%
[\Lambda^{-M},\Gamma_R]&=&[\Lambda^{-M},-\frac{x}{M\epsilon}\Lambda^M-\sum_{n\geq 0}\sum_ {\beta=-M+1}^{0}(n+1 + \frac{\beta}{M}
)
\Lambda^{-M({n+\frac{\beta}{M}})}t_{\beta,
n}]\\&=&1.\end{eqnarray*} For the proof of \eqref{LM time}, we need
to prove
\begin{eqnarray}\label{MLalpha}
&&\partial_{ t_{\alpha,n}}M_L=
[(B_{\alpha,n})_{+},M_L],\\\label{MRalpha}
&&\partial_{ t_{\alpha,n}}M_R=[(B_{\alpha,n})_{+},M_R],\\\label{MLbeta}
&&\partial_{ t_{\beta,n}}M_L=[-(B_{\beta,n})_{-},M_L],\\\label{MRbeta}
&&\partial_{ t_{\beta,n}}M_R=[-(B_{\beta,n})_{-},M_R],
\end{eqnarray}
where, $1\leq \alpha \leq N, -M+1\leq \beta \leq 0.$

Let us consider the following bracket
\begin{eqnarray}\label{gammaLbracket}[\partial_{ t_{\alpha,n}}-\Lambda^{N({n+1-\frac{\alpha-1}{N}})},\Gamma_L]=0,
\end{eqnarray}
which can be easily got by a direct computation. By dressing both
sides of the identity above using the operator $\P_L,$  the left
part of \eqref{gammaLbracket} becomes
\begin{eqnarray*}&&[\P_L\partial_{ t_{\alpha,n}}\P_L^{-1}-\L^{n+1-\frac{\alpha-1}{N}},\P_L\Gamma_L\P_L^{-1}]
=
[\partial_{t_{\alpha,n}}-(B_{\alpha,n})_{+},M_L],
\end{eqnarray*}which leads to \eqref{MLalpha}.

For the proof of \eqref{MLbeta}, we consider the following bracket
\begin{eqnarray}[\partial_{ t_{\beta,n}},\Gamma_L]=0.
\end{eqnarray}
By dressing the identity above in the same way we can get
\begin{eqnarray*}[\P_L\partial_{ t_{\beta,n}}\P_L^{-1},\P_L\Gamma_L\P_L^{-1}]=[\partial_{ t_{\beta,n}}+(B_{\beta,n})_{-},M_L],\end{eqnarray*}
 which leads to
$$\partial_{ t_{\beta,n}}M_L=[-(B_{\beta,n})_{-},M_L].$$
Similarly, we can prove  \eqref{MRalpha} and \eqref{MRbeta} by
dressing the identity
\begin{eqnarray*}[\partial_{ t_{\alpha,n}},\Gamma_R]=0,\end{eqnarray*}
and \begin{eqnarray*}[\partial_{ t_{\beta,n}}+\Lambda^{-M({n+1+\frac{\beta}{M}})},\Gamma_R]=0,
\end{eqnarray*}
respectively, through the dressing operator $\P_R$. Using proved  \eqref{LM time} and Lax equation
\eqref{edef}, we can prove \eqref{LMmix t} easily. \end{proof}

The equations in Proposition \ref{flowsofM} can be realized by linear equations in the
following proposition. To simplify the  theorem, we first introduce two
functions
 $w_L(t,\la)$ and $w_R(t,\la)$ which  have forms
\begin{eqnarray}\label{PLsymbol}
w_L(t,\la) &=&\P_L(x,\Lambda)e^{\xi_L(t,\la)}=P_L(x,\Lambda)e^{\xi_L(t,\la)},\\\label{PRsymbol}
w_{R}(t,\la) &=&\P_R(x,\Lambda)e^{\xi_R(t,\la)}=P_R(x,\Lambda)e^{\xi_R(t,\la)},
\end{eqnarray}
where
\begin{eqnarray}
\xi_L(t,\la) &=&\sum_{n\geq 0}\sum_
{\alpha=1}^{N}\la^{({n+1-\frac{\alpha-1}{N}})}t_{\alpha, n}
+\frac{x}{N\epsilon}\log \la,\\
\xi_{R}(t,\la) &=&-\sum_{n\geq 0}\sum_
{\beta=-M+1}^{0}
\la^{-({n+1+\frac{\beta}{M}})}t_{\beta,
n}+\frac{x}{M\epsilon}\log \la.
\end{eqnarray}
Functions $w_L(t,\la)$ and $w_{R}(t,\la)$ will be called wave functions.  $P_L$ and $P_R$ are called symbols of $\P_L$ and $\P_R$ respectively.
\begin{proposition}
The wave functions $w_L(t,\la)$ and $w_R(t,\la)$  satisfy the following linear equations
\begin{eqnarray}
\begin{cases}\label{wLlinear}
\L w_L(t,\la)=&\la w_L(t,\la),\\
M_Lw_L(t,\la)=&\partial_\la w_L(t,\la),\\
 \partial_{ t_{\gamma,n}}w_L(t,\la)=&A_{\gamma,n}w_L(t,\la),
 \end{cases}
 \end{eqnarray}
 \begin{eqnarray}
 \begin{cases}\label{wRlinear}
 \L w_{R}(t,\la)=&\la^{-1}w_{R}(t,\la),\\
M_{R}w_{R}(t,\la)=&\partial_{\la^{-1}}w_{R}(t,\la),\\
\partial_{
t_{\gamma,n}}w_{R}(t,\la)=&A_{\gamma,n}w_{R}(t,\la),
 \end{cases}
\end{eqnarray}
where, $ -M+1  \leq \gamma \leq N, n\geq 0$.

\end{proposition}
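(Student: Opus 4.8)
The plan is to verify each of the six linear equations by translating the operator identities already established in the previous propositions into statements about the wave functions, using the key observation that the exponentials $e^{\xi_L(t,\la)}$ and $e^{\xi_R(t,\la)}$ are simultaneous eigenfunctions of the bare shift operator $\Lambda$ and of the bare Orlov--Schulman operators $\Gamma_L,\Gamma_R$. Concretely, since $\Lambda = e^{\epsilon\partial_x}$ acts on $\la^{x/\epsilon}$ by multiplication by $\la$, I first record the base-point identities
\begin{align*}
\Lambda^{N}e^{\xi_L(t,\la)}&=\la^{N}e^{\xi_L(t,\la)}, &
\Lambda^{-M}e^{\xi_R(t,\la)}&=\la^{M}e^{\xi_R(t,\la)},
\end{align*}
and, using the explicit $t_{\gamma,n}$-dependence built into $\xi_L,\xi_R$,
\begin{align*}
\Gamma_L\,e^{\xi_L(t,\la)}&=\partial_\la\,e^{\xi_L(t,\la)}, &
\Gamma_R\,e^{\xi_R(t,\la)}&=\partial_{\la^{-1}}\,e^{\xi_R(t,\la)}.
\end{align*}
The first pair is immediate; the second pair is a direct differentiation check, matching the coefficient $(n+1-\tfrac{\alpha-1}{N})\la^{n-\frac{\alpha-1}{N}}$ produced by $\partial_\la$ acting on $\la^{n+1-\frac{\alpha-1}{N}}$ against the corresponding term of $\Gamma_L$ (and similarly for the $x$-term, where $\partial_\la\log\la=\la^{-1}$ reproduces $\tfrac{x}{N\epsilon}\Lambda^{-N}$ after eigenvalue substitution), and analogously for $\Gamma_R$ in the variable $\la^{-1}$.

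Next I would dress these base identities. For the spectral equations $\L w_L=\la w_L$ and $\L w_R=\la^{-1}w_R$, I apply $\P_L$ (resp. $\P_R$) to $\Lambda^N e^{\xi_L}=\la^N e^{\xi_L}$ (resp. $\Lambda^{-M}e^{\xi_R}=\la^M e^{\xi_R}$) and insert $\P_L^{-1}\P_L=1$ to obtain $\L\, w_L=\P_L\Lambda^N\P_L^{-1} w_L=\la^N w_L$; but here I must be careful that the stated equations are $\L w_L=\la w_L$ and $\L w_R=\la^{-1}w_R$, so the dressing should instead use the fractional-power relations \eqref{frac}, namely $\L^{1/N}=\P_L\Lambda\P_L^{-1}$ acting on $\Lambda e^{\xi_L}=\la^{1/N}e^{\xi_L}$ raised to the $N$th power, giving $\L w_L=\la w_L$ directly. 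For the $M$-equations, dressing $\Gamma_L e^{\xi_L}=\partial_\la e^{\xi_L}$ by $\P_L$ gives $M_L w_L=\P_L\Gamma_L\P_L^{-1}w_L=\P_L\,\partial_\la e^{\xi_L}$, and since $\P_L$ has coefficients independent of $\la$ the operator $\P_L$ commutes with $\partial_\la$, so the right-hand side equals $\partial_\la(\P_L e^{\xi_L})=\partial_\la w_L$, as required; the $M_R$ case is identical with $\partial_{\la^{-1}}$.

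For the time-evolution equations $\partial_{t_{\gamma,n}}w_L=A_{\gamma,n}w_L$ and likewise for $w_R$, I would differentiate $w_L=\P_L e^{\xi_L}$ by the product rule, obtaining $\partial_{t_{\gamma,n}}w_L=(\partial_{t_{\gamma,n}}\P_L)e^{\xi_L}+\P_L(\partial_{t_{\gamma,n}}e^{\xi_L})$. Into the first term I substitute the Sato equation \eqref{bn1}, $\partial_{\gamma,n}\P_L=-(B_{\gamma,n})_-\P_L$; into the second I use that $\partial_{t_{\gamma,n}}e^{\xi_L}=\la^{n+1-\frac{\gamma-1}{N}}e^{\xi_L}=B_{\gamma,n}\,\P_L^{-1}\cdot(\text{eigenvalue})$, more precisely $\P_L\,\partial_{t_{\gamma,n}}e^{\xi_L}=\P_L\Lambda^{N(n+1-\frac{\gamma-1}{N})}e^{\xi_L}=B_{\gamma,n}w_L=(B_{\gamma,n})_+w_L+(B_{\gamma,n})_-w_L$. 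Adding the two contributions collapses the $(B_{\gamma,n})_-$ pieces and leaves $(B_{\gamma,n})_+w_L=A_{\gamma,n}w_L$ in the range $1\le\gamma\le N$; for $-M+1\le\gamma\le 0$ the same bookkeeping, now using \eqref{bn1'} for $\P_R$ together with $A_{\gamma,n}=-(B_{\gamma,n})_-$, produces the $w_R$ statement. The main obstacle I anticipate is purely bookkeeping rather than conceptual: one must consistently identify the spectral eigenvalue of $B_{\gamma,n}$ on $e^{\xi_L}$ (a fractional power of $\la$) with the corresponding bare shift $\Lambda^{N(n+1-\frac{\gamma-1}{N})}$, and keep the splitting into $(\,\cdot\,)_+$ and $(\,\cdot\,)_-$ aligned with which dressing operator ($\P_L$ or $\P_R$) is being used, so that the cross terms cancel exactly; verifying that the $M_L,M_R$ dressings genuinely commute with $\partial_\la,\partial_{\la^{-1}}$ (coefficients of $\P_L,\P_R$ being $\la$-independent) is the one subtle point worth stating explicitly.
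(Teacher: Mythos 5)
Your strategy---dressing the bare identities satisfied by $e^{\xi_L}$ and $e^{\xi_R}$, then combining the product rule with the Sato equations \eqref{bn1}, \eqref{bn1'}---is exactly the paper's route, but there are two concrete problems. First, your opening ``base-point identities'' are false and contradict the rest of your own argument. The $x$-dependence of $\xi_L$ is $\frac{x}{N\epsilon}\log \la$, not $\frac{x}{\epsilon}\log\la$, so the correct identities are $\Lambda^{N}e^{\xi_L}=\la\,e^{\xi_L}$ and $\Lambda^{-M}e^{\xi_R}=\la^{-1}e^{\xi_R}$ (these are what the paper records), not $\la^{N}$ and $\la^{M}$. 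With the correct normalization the one-line dressing $\L w_L=\P_L\Lambda^{N}\P_L^{-1}\P_Le^{\xi_L}=\P_L\Lambda^{N}e^{\xi_L}=\la w_L$ already gives the stated spectral equation, and no fractional-power detour is needed; note moreover that your patch via $\Lambda e^{\xi_L}=\la^{1/N}e^{\xi_L}$ is consistent only with $\Lambda^{N}e^{\xi_L}=\la e^{\xi_L}$, so as written your proposal asserts two incompatible eigenvalues for $\Lambda^{N}$. The same normalization $\Lambda^{Ns}e^{\xi_L}=\la^{s}e^{\xi_L}$ is what your $\Gamma_L$-matching (the term $\frac{x}{N\epsilon}\Lambda^{-N}$ must produce $\frac{x}{N\epsilon}\la^{-1}$) and your key step $\P_L\,\partial_{t_{\gamma,n}}e^{\xi_L}=\P_L\Lambda^{N(n+1-\frac{\gamma-1}{N})}e^{\xi_L}=B_{\gamma,n}w_L$ silently rely on; under your displayed normalization both would fail.

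Second, your treatment of the $t_{\gamma,n}$-flows proves only half of what the proposition asserts: you handle $w_L$ for $1\le\gamma\le N$ and $w_R$ for $-M+1\le\gamma\le 0$, but both wave functions must satisfy $\partial_{t_{\gamma,n}}w=A_{\gamma,n}w$ for every $\gamma$ with $-M+1\le\gamma\le N$. The omitted cross cases are easy but must still be stated: since $\xi_L$ contains no $t_{\beta,n}$ with $-M+1\le\beta\le 0$, the product rule degenerates to $\partial_{t_{\beta,n}}w_L=(\partial_{t_{\beta,n}}\P_L)e^{\xi_L}=-(B_{\beta,n})_-w_L=A_{\beta,n}w_L$ (this is the case the paper writes out explicitly), and symmetrically, since $\xi_R$ contains no $t_{\alpha,n}$ with $1\le\alpha\le N$, one gets $\partial_{t_{\alpha,n}}w_R=(\partial_{t_{\alpha,n}}\P_R)e^{\xi_R}=(B_{\alpha,n})_+w_R=A_{\alpha,n}w_R$. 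Once the normalization is corrected and these two cases are added, your argument coincides with the paper's proof; your explicit observation that $\P_L,\P_R$ commute with $\partial_\la,\partial_{\la^{-1}}$ because their coefficients are $\la$-independent is a worthwhile point that the paper leaves implicit.
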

\begin{proof}
Because
\begin{eqnarray*}
\Lambda^N\exp(\frac{x}{N\epsilon}\log \la)=\la\exp(\frac{x}{N\epsilon}\log \la),
\end{eqnarray*}
so
\begin{eqnarray*}
\Lambda^Ne^{\xi_L(t,\la)}=\la e^{\xi_L(t,\la)}.
\end{eqnarray*}
Then using the definition of dressing operators, we can get
\begin{eqnarray*}\L w_L(t,\la)
&=&\L\P_L(x,\Lambda)e^{\xi_L(t,\la)}\\
& =&\P_L(x,\Lambda)\Lambda^Ne^{\xi_L(t,\la)}\\
&=&\la w_L(t,\la).
\end{eqnarray*}
Similarly we can do the following computation
\begin{eqnarray*} M_Lw_L(t,\la)
&=&M_L\P_L(x,\Lambda)e^{\xi_L(t,\la)}\\
&=&\P_L(x,\Lambda)(\frac{x}{N\epsilon}\Lambda^{-N}+\sum_{n\geq
0}\sum_ {\alpha=1}^{N}(n+1-\frac{\alpha-1}{N} )\Lambda^{N({n-\frac{\alpha-1}{N}})}t_{\alpha,
n})e^{\xi_L(t,\la)}\\
&=&\P_L(x,\Lambda)(\frac{x}{N\epsilon}\la^{-1}+\sum_{n\geq 0}\sum_
{\alpha=1}^{N}(n+1-\frac{\alpha-1}{N} )\la^{{n-\frac{\alpha-1}{N}}}t_{\alpha, n})e^{\xi_L(t,\la)}\\
&=&\P_L(x,\Lambda)\frac{\d}{ \d \la}e^{\xi_L(t,\la)}\\
&=&\frac{\d}{\d \la}w_L(t,\la).
\end{eqnarray*}

For the time flows of linear functions, we have to consider flows when $1\leq \gamma \leq N$  and flows when $ -M+1\leq \gamma  \leq 0$  separately.
Setting $1\leq \alpha \leq N, -M+1\leq \beta \leq 0,$ and
considering  the relations
\begin{eqnarray*}
e^{\xi_L(t,\la) }&=&\exp(\sum_{n\geq 0}\sum_
{\alpha=1}^{N}\La^{N({n+1-\frac{\alpha-1}{N}})}t_{\alpha, n})
e^{\frac{x}{N\epsilon}\log \la},
\end{eqnarray*}
we do the derivative as follows
\begin{eqnarray*}\!\!\!\! &&\d_{\alpha,n}w_L(t,\la)\\
\!\!\!\!&=&(\d_{\alpha,n}\P_L(x,\Lambda))\exp(\sum_{n\geq 0}\sum_
{\alpha=1}^{N}\La^{N({n+1-\frac{\alpha-1}{N}})}t_{\alpha, n})
e^{\frac{x}{N\epsilon}\log \la}\\
\!\!\!\!&&+\P_L(x,\Lambda)\La^{N({n+1-\frac{\alpha-1}{N}})}\exp(\sum_{n\geq 0}\sum_
{\alpha=1}^{N}\La^{N({n+1-\frac{\alpha-1}{N}})}t_{\alpha, n})
e^{\frac{x}{N\epsilon}\log \la}\\
\!\!\!\!&=&-(B_{\alpha,n})_-\P_L(x,\Lambda)e^{\xi_L(t,\la) }+\L^{{n+1-\frac{\alpha-1}{N}}}\P_L(x,\Lambda)e^{\xi_L(t,\la) }\\
\!\!\!\!&=&(B_{\alpha,n})_+w_L(t,\la).
\end{eqnarray*}
Because $\xi_L(t,\la)$ does not depend  the time variables $t_{\beta,n}$, we have
%
%
%
\begin{eqnarray*} \d_{\beta,n}w_L(t,\la)
&=&(\d_{\beta,n}\P_L(x,\Lambda))\exp(\sum_{n\geq 0}\sum_
{\alpha=1}^{N}\La^{N({n+1-\frac{\alpha-1}{N}})}t_{\alpha, n})
e^{\frac{x}{N\epsilon}\log \la}\\
&=&-(B_{\beta,n})_-\P_L(x,\Lambda)e^{\xi_L(t,\la) }\\
&=&-(B_{\beta,n})_-w_L(t,\la).
\end{eqnarray*}
Now the proof of \eqref{wLlinear} is finished. In the similar way, using
\begin{eqnarray*}
\Lambda^{-M}\exp(\frac{x}{M\epsilon}\log \la)=\la^{-1}\exp(\frac{x}{M\epsilon}\log \la),
\end{eqnarray*}
and
\begin{eqnarray*}
-\frac{x}{M\epsilon}\la\exp(\frac{x}{M\epsilon}\log \la)=\d _{\la^{-1}}\exp(\frac{x}{M\epsilon}\log \la),
\end{eqnarray*}
we can prove the equations in \eqref{wRlinear}.
\end{proof}

Moreover, on the space of wave functions $w_L(t,\la)$ and $w_R(t,\la)$,
identities $[\L,M_L]=1$, $[\L,M_R]=1$ and $[\la,\partial_\la]=-1$ induce an
anti-isomorphism between $(\L,M_L)$, $(\L,M_R)$ and $(\la,\partial_\la)$,
i.e.,
\begin{eqnarray}\label{generalLM}
M_L^m\L^lw_L(t,\la)&=&\la^l\left(\partial_\la^mw_L(t,\la)\right),\\
\L^lM_L^mw_L(t,\la)&=&\partial_\la^m\left(\la^l w_L(t,\la)\right), m,l\in
Z_+.
\\
\label{generalLM+1}
M_R^m\L^lw_R(t,\la)&=&\la^{-l}\left(\partial_{\la^{-1}}^mw_R(t,\la)\right),\;\\
\L^lM_R^mw_R(t,\la)&=&\partial_{\la^{-1}}^m\left({\la}^{-l}
w_R(t,\la)\right), m,l\in Z_+.
\end{eqnarray}

\section{Additional symmetries of BTH}
We are now to define the additional flows, and then to
prove that they are symmetries, which are called additional
symmetries of the BTH. We introduce additional
independent variables $t^*_{m,l}$ and define the actions of the
additional flows on the wave operators as
\begin{eqnarray}\label{definitionadditionalflowsonphi2}
\dfrac{\partial \P_L}{\partial
{t^*_{m,l}}}=-\left((M_L-M_R)^m\L^l\right)_{-}\P_L, \ \ \ \dfrac{\partial
\P_R}{\partial {t^*_{m,l}}}=\left((M_L-M_R)^m\L^l\right)_{+}\P_R,
\end{eqnarray}
where $m\geq 0, l\geq 0$. The following theorem shows that the definition \eqref{definitionadditionalflowsonphi2} is compatible with reduction condition \eqref{constraint lax} of the  BTH.
\begin{theorem}\label{preserve constraint}
The additional flows \eqref{definitionadditionalflowsonphi2} preserve reduction condition \eqref{constraint lax}.
\end{theorem}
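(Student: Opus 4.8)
The plan is to differentiate each of the two expressions for $\L$ in the reduction condition \eqref{constraint lax} along the additional flow $\d_{t^*_{m,l}}$ and to show that the consistency of the two resulting formulas for $\d_{t^*_{m,l}}\L$ is equivalent to the single commutation relation $[\L,(M_L-M_R)^m\L^l]=0$, which in turn follows immediately from Proposition \ref{flowsofM}.

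First I would abbreviate $A:=(M_L-M_R)^m\L^l$, so that $A=A_++A_-$ and the flow equations \eqref{definitionadditionalflowsonphi2} read $\d_{t^*_{m,l}}\P_L=-A_-\P_L$ and $\d_{t^*_{m,l}}\P_R=A_+\P_R$. Differentiating $\L=\P_L\La^N\P_L^{-1}$ and using $\d_{t^*_{m,l}}\P_L^{-1}=-\P_L^{-1}(\d_{t^*_{m,l}}\P_L)\P_L^{-1}=\P_L^{-1}A_-$ together with $\P_L\La^N\P_L^{-1}=\L$, I obtain $\d_{t^*_{m,l}}\L=[\L,A_-]$. Carrying out the same computation for $\L=\P_R\La^{-M}\P_R^{-1}$, with $\d_{t^*_{m,l}}\P_R^{-1}=-\P_R^{-1}A_+$, yields $\d_{t^*_{m,l}}\L=[A_+,\L]$.

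The two expressions for $\d_{t^*_{m,l}}\L$ must agree, and subtracting them gives $[\L,A_-]-[A_+,\L]=[\L,A_-+A_+]=[\L,A]$. Hence the additional flow is well defined on $\L$, equivalently the reduction condition \eqref{constraint lax} is preserved, precisely when $[\L,A]=0$.

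The decisive step is to verify $[\L,A]=0$, and I expect this to be the crux of the argument. From the identities $[\L,M_L]=1$ and $[\L,M_R]=1$ of Proposition \ref{flowsofM} (equation \eqref{LM=1}) one has $[\L,M_L-M_R]=0$, so $M_L-M_R$ commutes with $\L$ and therefore so does every power $(M_L-M_R)^m$; since $\L^l$ trivially commutes with $\L$, it follows that $[\L,A]=[\L,(M_L-M_R)^m\L^l]=0$, completing the proof. This commutativity $[\L,M_L-M_R]=0$ is exactly the feature singled out in the introduction that distinguishes the BTH additional symmetries from the KP/Toda case (where $[L,M]=1$), and it is what forces the combination $M_L-M_R$, rather than $M_L$ or $M_R$ alone, to be the correct object for generating constraint-preserving additional flows.
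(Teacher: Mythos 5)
Your proof is correct and follows essentially the same route as the paper: differentiate both dressing expressions $\P_L\La^N\P_L^{-1}$ and $\P_R\La^{-M}\P_R^{-1}$ along $\d_{t^*_{m,l}}$, obtain $[\L,((M_L-M_R)^m\L^l)_-]$ and $[((M_L-M_R)^m\L^l)_+,\L]$ respectively, and reconcile them via $[\L,M_L-M_R]=0$. Your only (welcome) addition is spelling out that $[\L,M_L-M_R]=0$ follows from $[\L,M_L]=[\L,M_R]=1$, a step the paper states without derivation.
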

\begin{proof} By performing the derivative on $\L$ dressed by $\P_L$ and
using the additional flow about $\P_L$ in \eqref{definitionadditionalflowsonphi2}, we get
\begin{eqnarray*}
(\partial_{t^*_{m,l}}\L)&=& (\partial_{t^*_{m,l}}\P_L)\ \La^N \P_L^{-1}
+ \P_L\ \La^N\ (\partial_{t_{m,l}}\P_L^{-1})\\
&=&-((M_L-M_R)^m\L^l)_{-} \P_L\ \La^N\ \P_L^{-1}- \P_L\ \La^N
\P_L^{-1}\ (\partial_{t^*_{m,l}}\P_L)
\ \P_L^{-1}\\
&=&-((M_L-M_R)^m\L^l)_{-} \L+ \L ((M_L-M_R)^m\L^l)_{-}\\
&=&-[((M_L-M_R)^m\L^l)_{-},\L].
\end{eqnarray*}
Similarly, we perform the derivative on $\L$ dressed by $\P_R$ and
use the additional flow about $\P_R$ in \eqref{definitionadditionalflowsonphi2} to get the following
\begin{eqnarray*}
(\partial_{t^*_{m,l}}\L)&=& (\partial_{t^*_{m,l}}\P_R)\ \La^{-M} \P_R^{-1}
+ \P_R\ \La^{-M}\ (\partial_{t_{m,l}}\P_R^{-1})\\
&=&((M_L-M_R)^m\L^l)_{+} \P_R\ \La^{-M}\ \P_R^{-1}- \P_R\ \La^{-M}
\P_R^{-1}\ (\partial_{t^*_{m,l}}\P_R)
\ \P_R^{-1}\\
&=&((M_L-M_R)^m\L^l)_{+} \L- \L ((M_L-M_R)^m\L^l)_{+}\\
&=&[((M_L-M_R)^m\L^l)_{+},\L].
\end{eqnarray*}
Because
\begin{eqnarray}\label{ETHadditionalflow111.}
[M_L-M_R,\L]=0,
\end{eqnarray}
therefore
\begin{eqnarray}\label{ETHadditionalflow1111}
\dfrac{\partial \L}{\partial
{t^*_{m,l}}}=[-\left((M_L-M_R)^m\L^l\right)_{-},
\L]=[\left((M_L-M_R)^m\L^l\right)_{+}, \L],
\end{eqnarray}
which gives the compatibility of additional flow of BTH with reduction condition \eqref{constraint lax}.
\end{proof}

Similarly, we can take derivatives on dressing structure of  $M_L$ and  $M_R$ to get the following proposition.
\begin{proposition}\label{add flow}
The additional derivatives  act on  $M_L$, $M_R$ as
\begin{eqnarray}
\label{ETHadditionalflow11'}
\dfrac{\partial
M_L}{\partial{t^*_{m,l}}}&=&[-\left((M_L-M_R)^m\L^l\right)_{-}, M_L],
\\
\label{ETHadditionalflow12}
\dfrac{\partial
M_R}{\partial{t^*_{m,l}}}&=&[\left((M_L-M_R)^m\L^l\right)_{+}, M_R].
\end{eqnarray}
\end{proposition}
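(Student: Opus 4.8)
The plan is to prove Proposition \ref{add flow} by the same dressing technique used in Theorem \ref{preserve constraint}, now applied to the operators $M_L$ and $M_R$ rather than to $\L$. Recall from \eqref{Moperator} that $M_L=\P_L\Gamma_L\P_L^{-1}$ and $M_R=\P_R\Gamma_R\P_R^{-1}$, where $\Gamma_L$ and $\Gamma_R$ contain no dependence on the additional variables $t^*_{m,l}$. Hence $\partial_{t^*_{m,l}}\Gamma_L=0$ and $\partial_{t^*_{m,l}}\Gamma_R=0$, and the entire $t^*_{m,l}$-dependence of $M_L$ and $M_R$ comes through the wave operators $\P_L$ and $\P_R$.

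First I would differentiate $M_L=\P_L\Gamma_L\P_L^{-1}$ with respect to $t^*_{m,l}$. Using the product rule together with $\partial_{t^*_{m,l}}\Gamma_L=0$ and the identity $\partial_{t^*_{m,l}}(\P_L^{-1})=-\P_L^{-1}(\partial_{t^*_{m,l}}\P_L)\P_L^{-1}$, I obtain
\begin{eqnarray*}
\partial_{t^*_{m,l}}M_L
&=&(\partial_{t^*_{m,l}}\P_L)\Gamma_L\P_L^{-1}
-\P_L\Gamma_L\P_L^{-1}(\partial_{t^*_{m,l}}\P_L)\P_L^{-1}\\
&=&(\partial_{t^*_{m,l}}\P_L)\P_L^{-1}M_L
-M_L(\partial_{t^*_{m,l}}\P_L)\P_L^{-1}
=[(\partial_{t^*_{m,l}}\P_L)\P_L^{-1},M_L].
\end{eqnarray*}
Substituting the definition \eqref{definitionadditionalflowsonphi2} of $\partial_{t^*_{m,l}}\P_L$ gives $(\partial_{t^*_{m,l}}\P_L)\P_L^{-1}=-((M_L-M_R)^m\L^l)_-$, which yields \eqref{ETHadditionalflow11'} at once. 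The derivation of \eqref{ETHadditionalflow12} for $M_R$ is entirely parallel, starting from $M_R=\P_R\Gamma_R\P_R^{-1}$ and using the second equation in \eqref{definitionadditionalflowsonphi2}, so that $(\partial_{t^*_{m,l}}\P_R)\P_R^{-1}=((M_L-M_R)^m\L^l)_+$ enters the commutator with the opposite sign.

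The main point to verify carefully, rather than a genuine obstacle, is that the conjugation rewriting is legitimate: one must check that $(\partial_{t^*_{m,l}}\P_L)\Gamma_L\P_L^{-1}$ can indeed be regrouped as $[(\partial_{t^*_{m,l}}\P_L)\P_L^{-1}]M_L$ by inserting $\P_L^{-1}\P_L=1$, which is immediate since $M_L=\P_L\Gamma_L\P_L^{-1}$. No use of $[\L,M_L]=1$ or of the projection structure is needed here; the proposition is a pure consequence of the dressing formulas and the fact that $\Gamma_L,\Gamma_R$ are $t^*$-independent. Thus the proof is a short two-line computation for each of the two identities, mirroring the structure already established for $\L$ in the preceding theorem.
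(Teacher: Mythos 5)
Your proposal is correct and follows essentially the same route as the paper: differentiate the dressing formula $M_L=\P_L\Gamma_L\P_L^{-1}$ (resp. $M_R=\P_R\Gamma_R\P_R^{-1}$) with the product rule, use that $\Gamma_L,\Gamma_R$ are independent of $t^*_{m,l}$, and substitute the definition \eqref{definitionadditionalflowsonphi2} to obtain the commutator. The only cosmetic difference is that you first isolate the general identity $\partial_{t^*_{m,l}}M_L=[(\partial_{t^*_{m,l}}\P_L)\P_L^{-1},M_L]$ before substituting, whereas the paper substitutes directly; the content is identical.
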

\begin{proof} By performing the derivative on  $M_L$ given in (\ref{Moperator}), there exists
a similar derivative as $\partial_{t^*_{m,l}}\L$, i.e.,
\begin{eqnarray*}
(\partial_{t^*_{m,l}}M_L)&\!\!\!=\!\!\!&(\partial_{t^*_{m,l}}\P_L)\ \Gamma_L \P_L^{-1}
+ \P_L\ \Gamma_L\ (\partial_{t^*_{m,l}}\P_L^{-1})\\
&\!\!\!=\!\!\!&-((M_L-M_R)^m\L^l)_{-} \P_L\ \Gamma_L\ \P_L^{-1}- \P_L\ \Gamma_L
\P_L^{-1}\ (\partial_{t^*_{m,l}}\P_L)
\ \P_L^{-1}\\
&\!\!\!=\!\!\!&-((M_L-M_R)^m\L^l)_{-} M_L+ M_L
((M_L-M_R)^m\L^l)_{-}\\
&=&-[((M_L-M_R)^m\L^l)_{-}, M_L].
\end{eqnarray*}
Here the fact that $\Gamma_L$ does not depend on the additional
variables $t^*_{m,l}$ has been used. Other identities can also
be obtained in a similar way.
\end{proof}

By two propositions above,  the following corollary can be easily got.
\begin{corollary}\label{additionflowsonLnMmAnk}
For $1\leq \alpha \leq N, -M+1\leq \beta \leq 0, \ \ n,m,l\geq 0$,
the following identities hold
\begin{eqnarray}\label{ETHadditionalflow4}
&\!\!\!\!\!\!\!\!\!\!\!\!\!\!\!\!\!\!\!\!&\dfrac{\partial \L^{\frac nN}}{\partial{t^*_{m,l}}}=[-((M_L-M_R)^m\L^l)_{-},
\L^{\frac nN}] ,\ \ \ \ \ \ \ \ \
 \dfrac{\partial
B_{\alpha,n}}{\partial{t^*_{m,l}}}=[-((M_L-M_R)^m\L^l)_{-}, B_{\alpha,n}],
\\&\!\!\!\!\!\!\!\!\!\!\!\!\!\!\!\!\!\!\!\!&
\label{ETHadditionalflow4'}
\dfrac{\partial \L^{\frac nM}}{\partial{t^*_{m,l}}}=[((M_L-M_R)^m\L^l)_{+},
\L^{\frac nM}] ,\ \ \ \ \ \ \ \ \ \ \
 \dfrac{\partial
B_{\beta,n}}{\partial{t^*_{m,l}}}=[((M_L-M_R)^m\L^l)_{+}, B_{\beta,n}],
\\&\!\!\!\!\!\!\!\!\!\!\!\!\!\!\!\!\!\!\!\!&
\label{ETHadditionalflow5}
 \dfrac{\partial M_L^n}{\partial{t^*_{m,l}}}=[-((M_L-M_R)^m\L^l)_{-}, M_L^n]
,\ \ \ \ \ \ \ \ \
 \dfrac{\partial M_R^n}{\partial{t^*_{m,l}}}=[((M_L-M_R)^m\L^l)_{+}, M_R^n],
\\&\!\!\!\!\!\!\!\!\!\!\!\!\!\!\!\!\!\!\!\!&
\label{eqadditionflowsonLnMmAnk'}
 \dfrac{\partial M_L^n\L^k}{\partial{t^*_{m,l}}}=-[((M_L-M_R)^m\L^l)_{-}, M_L^n\L^k]
,\ \ \
 \dfrac{\partial M_R^n\L^k}{\partial{t^*_{m,l}}}=[((M_L-M_R)^m\L^l)_{+},
 M_R^n\L^k].
\end{eqnarray}
\end{corollary}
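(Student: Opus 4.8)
The plan is to derive Corollary \ref{additionflowsonLnMmAnk} directly from Theorem \ref{preserve constraint} and Proposition \ref{add flow} by dressing and by applying the Leibniz rule for the additional derivative $\partial_{t^*_{m,l}}$. The underlying observation is that each object appearing in the corollary---$\L^{n/N}$, $\L^{n/M}$, $B_{\alpha,n}$, $B_{\beta,n}$, $M_L^n$, $M_R^n$, and the mixed products $M_L^n\L^k$, $M_R^n\L^k$---is built out of $\L$, $M_L$, $M_R$ by taking powers and products, and that all of these obey the \emph{same} adjoint flow. Write $D_-:=-\bigl((M_L-M_R)^m\L^l\bigr)_-$ and $D_+:=\bigl((M_L-M_R)^m\L^l\bigr)_+$. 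We already know $\partial_{t^*_{m,l}}\L=[D_-,\L]=[D_+,\L]$ from \eqref{ETHadditionalflow1111}, $\partial_{t^*_{m,l}}M_L=[D_-,M_L]$ and $\partial_{t^*_{m,l}}M_R=[D_+,M_R]$ from Proposition \ref{add flow}.

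First I would establish the two fractional-power identities in \eqref{ETHadditionalflow4} and \eqref{ETHadditionalflow4'}. Since $\L^{1/N}=\P_L\Lambda\P_L^{-1}$ by \eqref{frac}, applying the Leibniz rule and the additional flow \eqref{definitionadditionalflowsonphi2} for $\P_L$ exactly as in the proof of Theorem \ref{preserve constraint} (with $\Lambda^N$ replaced by $\Lambda$) yields $\partial_{t^*_{m,l}}\L^{1/N}=[D_-,\L^{1/N}]$; raising to the $n$-th power and using the derivation property of $\partial_{t^*_{m,l}}$ gives $\partial_{t^*_{m,l}}\L^{n/N}=[D_-,\L^{n/N}]$. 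The identity for $B_{\alpha,n}=\L^{n+1-(\alpha-1)/N}$ is then the special case $\L^{n/N}$ with the appropriate exponent. The $\L^{n/M}$ and $B_{\beta,n}$ identities follow identically from $\L^{1/M}=\P_R\Lambda^{-1}\P_R^{-1}$ and the $\P_R$ flow, producing the commutator with $D_+$.

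Next I would handle $M_L^n$, $M_R^n$ in \eqref{ETHadditionalflow5} and the mixed products in \eqref{eqadditionflowsonLnMmAnk'}. From $\partial_{t^*_{m,l}}M_L=[D_-,M_L]$ the Leibniz rule immediately gives $\partial_{t^*_{m,l}}M_L^n=[D_-,M_L^n]$, and similarly $\partial_{t^*_{m,l}}M_R^n=[D_+,M_R^n]$. For the mixed products one combines $\partial_{t^*_{m,l}}M_L^n=[D_-,M_L^n]$ with $\partial_{t^*_{m,l}}\L^k=[D_-,\L^k]$ (obtained as above, noting $[D_-,\L]=\partial_{t^*_{m,l}}\L$) and applies Leibniz once more:
\begin{eqnarray*}
\partial_{t^*_{m,l}}(M_L^n\L^k)=[D_-,M_L^n]\L^k+M_L^n[D_-,\L^k]=[D_-,M_L^n\L^k],
\end{eqnarray*}
and analogously for $M_R^n\L^k$ with $D_+$. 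This is almost the whole content of the corollary; the remark "can be easily got" in the text is justified precisely by the uniformity of the flow.

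The one point requiring care---and the main (mild) obstacle---is consistency: the left objects $\L^{n/N}$, $B_{\alpha,n}$, $M_L^n$ are all naturally dressed by $\P_L$ and so flow by $\mathrm{ad}_{D_-}$, whereas the right objects $\L^{n/M}$, $B_{\beta,n}$, $M_R^n$ are dressed by $\P_R$ and flow by $\mathrm{ad}_{D_+}$; the two descriptions agree on $\L$ itself only because of the key identity $[M_L-M_R,\L]=0$ in \eqref{ETHadditionalflow111.}, which forces $[D_--D_+,\L]=-[(M_L-M_R)^m\L^l,\L]=0$. Thus when forming a mixed product one must take the \emph{matching} dressing operator for each factor so that a single adjoint ($D_-$ for the $\P_L$-side, $D_+$ for the $\P_R$-side) acts throughout, and then the Leibniz computation closes without cross terms. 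Once this bookkeeping is respected, every identity in the corollary reduces to a one-line application of the derivation property to the results of Theorem \ref{preserve constraint} and Proposition \ref{add flow}.
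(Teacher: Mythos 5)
Your proposal is correct, but it is organized differently from the paper's own proof, and the difference is worth noting. The paper never invokes the Leibniz rule on products: it writes \emph{each} object in the corollary directly as a dressed, $t^*$-independent symbol --- $\L^{\frac nN}=\P_L\Lambda^n\P_L^{-1}$, $\L^{\frac nM}=\P_R\Lambda^{-n}\P_R^{-1}$, and (``in similar ways'') $M_L^n\L^k=\P_L\Gamma_L^n\Lambda^{Nk}\P_L^{-1}$, $M_R^n\L^k=\P_R\Gamma_R^n\Lambda^{-Mk}\P_R^{-1}$ --- and then differentiates each such expression with \eqref{definitionadditionalflowsonphi2}, repeating verbatim the computation in the proof of Theorem \ref{preserve constraint}. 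In that route the question of which projection ($+$ or $-$) goes with which object never arises: it is settled automatically by which dressing operator conjugates the symbol, so no cross terms ever appear. Your route instead takes only the flows of the elementary pieces ($\L$, $\L^{\frac1N}$, $\L^{\frac1M}$, $M_L$, $M_R$) as given by Theorem \ref{preserve constraint}, \eqref{ETHadditionalflow1111} and Proposition \ref{add flow}, and propagates them to powers and mixed products by the derivation property of $\partial_{t^*_{m,l}}$ together with the fact that a commutator with a fixed operator is a derivation; this makes the corollary a purely algebraic consequence of already-proven statements, with no further dressing computation. The price is exactly the bookkeeping you flag: the Leibniz computation for $M_L^n\L^k$ (resp.\ $M_R^n\L^k$) closes only because $\L$ flows under \emph{both} adjoints, $\partial_{t^*_{m,l}}\L=[-(W)_-,\L]=[(W)_+,\L]$ with $W=(M_L-M_R)^m\L^l$, which is \eqref{ETHadditionalflow1111} and ultimately rests on $[M_L-M_R,\L]=0$ in \eqref{ETHadditionalflow111.}. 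Your proof makes this consistency point explicit, whereas the paper's dressing argument silently bypasses it; conversely, the paper's version requires no matching-adjoint care at all. Both proofs are complete and draw on the same ultimate inputs, namely the dressing relations and the additional Sato equations \eqref{definitionadditionalflowsonphi2}.
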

\begin{proof}  First we present the proof of the first
equation.   Considering the dressing relation
\begin{eqnarray*}
 \L^{\frac nN}=\P_L\Lambda^n\P_L^{-1},
\end{eqnarray*}
and \eqref{definitionadditionalflowsonphi2}, we can get relations
\begin{eqnarray*}
\dfrac{\partial \L^{\frac{n}{N}}}{\partial{t^*_{m,l}}}=[-((M_L-M_R)^m\L^l)_{-}, \L^{\frac{n}{N}}],
\end{eqnarray*}
which further leads to the second identity in
\eqref{ETHadditionalflow4}. Similarly, by relations
\begin{eqnarray*}
 \L^{\frac nM}=
 \P_R\Lambda^{-n}\P_R^{-1},\ \ \ \ 
 M_L=\P_L\Gamma_L\P_L^{-1},\ \ \ \ 
 M_R=\P_R\Gamma_R\P_R^{-1},
\end{eqnarray*}
and \eqref{definitionadditionalflowsonphi2}, we can get other
relations in similar ways.
\end{proof}

With Proposition \ref{add flow} and Corollary \ref{additionflowsonLnMmAnk}, the following theorem can be proved.

\begin{theorem}\label{symmetry}
The additional flows $\partial_{t^*_{m,l}}$ commute
with the bigraded  Toda hierarchy flows $\partial_{t_{\gamma,n}}$, i.e.,
\begin{eqnarray}
[\partial_{t^*_{m,l}}, \partial_{t_{\gamma,n}}]\Phi=0,
\end{eqnarray}
where $\Phi$ can be $\P_L$, $\P_R$ or $L$, $-M+1\leq \gamma \leq N$ and
 $
\partial_{t^*_{m,l}}=\frac{\partial}{\partial{t^*_{m,l}}},
\partial_{t_{\gamma,n}}=\frac{\partial}{\partial{t_{\gamma,n}}}$.

\end{theorem}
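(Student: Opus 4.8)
The plan is to handle $\Phi=\P_L$ and $\Phi=\P_R$ by reducing the commutator of the two flows to a zero-curvature identity, and then to obtain $\Phi=\L$ for free from the dressing relation. Recall that if a flow is written in dressing form $\partial_i\P_L=U_i\P_L$, then $[\partial_i,\partial_j]\P_L=\big(\partial_iU_j-\partial_jU_i-[U_i,U_j]\big)\P_L$, so commutativity on $\P_L$ is equivalent to the vanishing of the bracketed expression. From the Sato equation \eqref{bn1} the BTH flow contributes $U_{t_{\gamma,n}}=-(B_{\gamma,n})_-$, while \eqref{definitionadditionalflowsonphi2} gives the additional flow $U_{t^*_{m,l}}=-\big((M_L-M_R)^m\L^l\big)_-$. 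Abbreviating $A:=(M_L-M_R)^m\L^l$ and $B:=B_{\gamma,n}$, the identity to be proved becomes
\begin{equation*}
\partial_{t_{\gamma,n}}(A_-)-\partial_{t^*_{m,l}}(B_-)-[A_-,B_-]=0.
\end{equation*}

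First I would compute the two derivatives appearing here. For the ordinary flow, Proposition \ref{flowsofM} gives $\partial_{t_{\gamma,n}}M_L=[A_{\gamma,n},M_L]$ and $\partial_{t_{\gamma,n}}M_R=[A_{\gamma,n},M_R]$, and \eqref{edef} gives $\partial_{t_{\gamma,n}}\L=[A_{\gamma,n},\L]$; a one-line Leibniz computation then yields $\partial_{t_{\gamma,n}}A=[A_{\gamma,n},A]$, and since differentiation commutes with the projection, $\partial_{t_{\gamma,n}}(A_-)=\big([A_{\gamma,n},A]\big)_-$. For the additional flow, Corollary \ref{additionflowsonLnMmAnk} supplies $\partial_{t^*_{m,l}}B_{\gamma,n}=[-A_-,B_{\gamma,n}]$ when $1\le\gamma\le N$ and $\partial_{t^*_{m,l}}B_{\gamma,n}=[A_+,B_{\gamma,n}]$ when $-M+1\le\gamma\le0$, so that $\partial_{t^*_{m,l}}(B_-)$ is the negative part of the relevant commutator. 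The structural input that makes $A$ a genuine difference operator to which this projection calculus applies is the identity $[\L,M_L-M_R]=0$ from \eqref{ETHadditionalflow111.}; I would flag this as the key new fact compared with the classical Toda setting.

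Next I would split according to the range of $\gamma$, inserting $A_{\gamma,n}=(B_{\gamma,n})_+$ in the first range and $A_{\gamma,n}=-(B_{\gamma,n})_-$ in the second, and then finish by elementary projection algebra based on two facts: the commutator of two purely nonnegative difference operators has no negative part, $\big([X_+,Y_+]\big)_-=0$, and the commutator of two purely negative operators stays purely negative. Writing $A=A_++A_-$ and $B=B_++B_-$ and discarding the vanishing pieces, the explicit $[A_-,B_-]$ term cancels against the like purely negative term produced by whichever of the two derivatives carries it in that range, while the remaining mixed contributions collapse in pairs such as $\big([B_+,A_-]\big)_-+\big([A_-,B_+]\big)_-=0$ by antisymmetry of the bracket. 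This verifies the displayed identity and hence $[\partial_{t^*_{m,l}},\partial_{t_{\gamma,n}}]\P_L=0$. The case $\Phi=\P_R$ runs in parallel: with $U_{t_{\gamma,n}}=(B_{\gamma,n})_+$ and $U_{t^*_{m,l}}=A_+$ from \eqref{bn1'} and \eqref{definitionadditionalflowsonphi2}, one is reduced to the positive-part identity $\partial_{t^*_{m,l}}(B_+)-\partial_{t_{\gamma,n}}(A_+)-[A_+,B_+]=0$, which is proved by the mirror projection argument.

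Finally, the case $\Phi=\L$ requires no additional computation. Differentiating $\P_L\P_L^{-1}=1$ shows that commutativity of the two flows on $\P_L$ forces it on $\P_L^{-1}$ as well, and since $\L=\P_L\La^N\P_L^{-1}$ with $\La^N$ constant, the two flows commute on $\L$. I expect the only real obstacle to lie in the bookkeeping of the third paragraph: one must be scrupulous that $\partial_{t_{\gamma,n}}$ commutes with $(\cdot)_-$ and that the substitution $A_{\gamma,n}=\pm(B_{\gamma,n})_\pm$ is made with the correct sign in each $\gamma$-range, since a single misplaced sign destroys the cancellation.
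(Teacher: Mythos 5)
Your proposal is correct and follows essentially the same route as the paper: both reduce the commutator on $\P_L$ (resp.\ $\P_R$) to a projection-algebra identity, using the Sato equations and \eqref{definitionadditionalflowsonphi2} for the flow generators, Proposition \ref{flowsofM} together with the Lax equation for $\partial_{t_{\gamma,n}}\big((M_L-M_R)^m\L^l\big)$, Corollary \ref{additionflowsonLnMmAnk} for $\partial_{t^*_{m,l}}B_{\gamma,n}$, and the facts $\big([X_+,Y_+]\big)_-=0$ and that brackets of purely negative parts stay negative. Your packaging of the Leibniz expansion as a zero-curvature condition, and your derivation-property shortcut for $\Phi=\L$, are only cosmetic reorganizations of the computation the paper carries out term by term.
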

\begin{proof} According to the definition,
\begin{eqnarray*}
[\partial_{t^*_{m,l}},\partial_{t_{\gamma,n}}]\P_L=\partial_{t^*_{m,l}}
(\partial_{t_{\gamma,n}}\P_L)-
\partial_{t_{\gamma,n}} (\partial_{t^*_{m,l}}\P_L),
\end{eqnarray*}
and using the actions of the additional flows and the bigraded Toda
flows on $\P_L$, for  $1\leq\alpha\leq N$, we have
\begin{eqnarray*}
[\partial_{t^*_{m,l}},\partial_{t_{\alpha,n}}]\P_L
&=& -\partial_{t^*_{m,l}}\left((B_{\alpha,n})_{-}\P_L\right)+
\partial_{t_{\alpha,n}} \left(((M_L-M_R)^m\L^l)^m_{-}\P_L \right)\\
&=& -(\partial_{t^*_{m,l}}B_{\alpha,n} )_{-}\P_L-
(B_{\alpha,n})_{-}(\partial_{t^*_{m,l}}\P_L)\\&&+
[\partial_{t_{\alpha,n}} ((M_L-M_R)^m\L^l)]_{-}\P_L +
((M_L-M_R)^m\L^l)_{-}(\partial_{t_{\alpha,n}}\P_L).
\end{eqnarray*}
Using \eqref{definitionadditionalflowsonphi2} and Theorem \ref{flowsofM}, it
equals
\begin{eqnarray*}
[\partial_{t^*_{m,l}},\partial_{t_{\alpha,n}}]\P_L
&=&[\left((M_L-M_R)^m\L^l\right)_{-}, B_{\alpha,n}]_{-}\P_L+
(B_{\alpha,n})_{-}\left((M_L-M_R)^m\L^l\right)_{-}\P_L\\
&&+[(B_{\alpha,n})_{+},(M_L-M_R)^m\L^l]_{-}\P_L-((M_L-M_R)^m\L^l)_{-}(B_{\alpha,n})_{-}\P_L\\
&=&[((M_L-M_R)^m\L^l)_{-}, B_{\alpha,n}]_{-}\P_L- [(M_L-M_R)^m\L^l,
(B_{\alpha,n})_{+}]_{-}\P_L\\&&+
[(B_{\alpha,n})_{-},((M_L-M_R)^m\L^l)_{-}]\P_L\\
&=&0.
\end{eqnarray*}
Similarly, using \eqref{definitionadditionalflowsonphi2} and Theorem \ref{flowsofM}, we can prove the additional flows   commute with
flows $\d_{t_{\beta,n}}$ with $-M+1\leq\beta\leq 0$ in the sense
of acting on $\P_L$. Of course, using
\eqref{definitionadditionalflowsonphi2}, \eqref{bn1},
\eqref{bn1'} and Theorem \ref{flowsofM}, we can also prove the additional flows  commute
with all flows of BTH  in the sense of acting on $\P_R, L$. Here
we also give the proof for commutativity of additional symmetries with
$\d_{t_{\beta,n}}$, where $-M+1\leq\beta\leq 0$. To be a little
different from the proof above, we let the Lie bracket act on $\P_R$,
\begin{eqnarray*}
[\partial_{t^*_{m,l}},\partial_{t_{\beta,n}}]\P_R &=&
\partial_{t^*_{m,l}}\left((B_{\beta,n})_{+}\P_R \right)-
\partial_{t_{\beta,n}} \left(((M_L-M_R)^m\L^l)_{+}\P_R  \right)\\
&=& (\partial_{t^*_{m,l}}B_{\beta,n} )_{+}\P_R +
(B_{\beta,n})_{+}(\partial_{t^*_{m,l}}\P_R
)\\&&-(\partial_{t_{\beta,n}} ((M_L-M_R)^m\L^l))_{+}\P_R
-((M_L-M_R)^m\L^l)_{+}(\partial_{t_{\beta,n}}\P_R ),
\end{eqnarray*}
which further leads to
\begin{eqnarray*}
[\partial_{t^*_{m,l}},\partial_{t_{\beta,n}}]\P_R
&=&[((M_L-M_R)^m\L^l)_{+}, B_{\beta,n}]_{+}\P_R+
(B_{\beta,n})_{+}((M_L-M_R)^m\L^l)_{+}\P_R \\
&&+[(B_{\beta,n})_{-},(M_L-M_R)^m\L^l]_{+}\P_R -((M_L-M_R)^m\L^l)_{+}(B_{\beta,n})_{+}\P_R \\
&=&[((M_L-M_R)^m\L^l)_{+}, B_{\beta,n}]_{+}\P_R
+[(B_{\beta,n})_{-},(M_L-M_R)^m\L^l]_{+}\P_R\\&& +
[(B_{\beta,n})_{+},((M_L-M_R)^m\L^l)_{+}]\P_R \\
&=&[((M_L-M_R)^m\L^l)_{+}, B_{\beta,n}]_{+}\P_R  +
[B_{\beta,n},((M_L-M_R)^m\L^l)_{+}]_{+}\P_R =0.
\end{eqnarray*}
In the proof above, $[(B_{\gamma,n})_{+},
((M_L-M_R)^m\L^l)]_{-}= [(B_{\gamma,n})_{+}, ((M_L-M_R)^m\L^l)_{-}]_{-}$ has
been used, since $[(B_{\gamma,n})_{+}, ((M_L-M_R)^m\L^l)_{+}]_{-}=0$.
\end{proof}
The commutative property in Theorem \ref{symmetry} means that
additional flows are symmetries of the BTH.
Since they are symmetries, it is natural to consider the algebraic
structures among these additional symmetries. So we obtain the following important
theorem.
\begin{theorem}\label{WinfiniteCalgebra}
The additional flows  $\partial_{t^*_{m,l}}(m>0,l\geq 0)$ form a Block type Lie algebra with the
following relation
 \begin{eqnarray}\label{algebra relation}
[\partial_{t^*_{m,l}},\partial_{t^*_{n,k}}]= (km-n l)\d^*_{m+n-1,k+l-1},
\end{eqnarray}
which holds in the sense of acting on  $\P_L$, $\P_R$ or $\L$ and  $m,n\geq0,\, l,k\geq 0.$
\end{theorem}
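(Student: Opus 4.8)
The plan is to prove \eqref{algebra relation} by letting both sides act on $\P_L$; the statements for $\P_R$ and $\L$ then follow from the same computation with the projections $(\,)_+$ and $(\,)_-$ and the two dressings interchanged. Throughout write $A=(M_L-M_R)^m\L^l$ and $B=(M_L-M_R)^n\L^k$, and abbreviate $\d^*_{m,l}=\partial_{t^*_{m,l}}$. The structural fact I will lean on is that, by \eqref{LM=1}, $[\L,M_L-M_R]=[\L,M_L]-[\L,M_R]=0$; hence $A$ and $B$ are polynomials in the two \emph{commuting} operators $\L$ and $M_L-M_R$, so that $[A,B]=0$ and $[A,\L]=[B,\L]=0$. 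This commutativity (in place of the relation $[\L,M_L]=1$ used separately below) is exactly what collapses the would-be $W_{1+\infty}$ structure into the simpler Block bracket.

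First I would reduce the double bracket. From \eqref{definitionadditionalflowsonphi2}, $\d^*_{m,l}\P_L=-A_-\P_L$ and $\d^*_{n,k}\P_L=-B_-\P_L$, and a routine application of the Leibniz rule gives
\[
 [\d^*_{m,l},\d^*_{n,k}]\P_L=\bigl(-(\d^*_{m,l}B)_-+(\d^*_{n,k}A)_-+[B_-,A_-]\bigr)\P_L .
\]
Thus everything is controlled by the two flow-derivatives $\d^*_{m,l}B$ and $\d^*_{n,k}A$. The crucial input is how one flow moves $M_L-M_R$: combining \eqref{ETHadditionalflow11'}--\eqref{ETHadditionalflow12} and substituting $M_R=M_L-(M_L-M_R)$ gives $\d^*_{n,k}(M_L-M_R)=[-B_-,\,M_L-M_R]-[B,M_R]$, whereas $\d^*_{n,k}\L=[-B_-,\L]$ by \eqref{ETHadditionalflow1111}. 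Applying $\d^*_{n,k}$ as a derivation to the product $A=(M_L-M_R)^m\L^l$ then yields
\[
 \d^*_{n,k}A=[-B_-,A]-R_A,\qquad R_A=\Bigl(\textstyle\sum_{i=0}^{m-1}(M_L-M_R)^i\,[B,M_R]\,(M_L-M_R)^{m-1-i}\Bigr)\L^l ,
\]
the correction $R_A$ collecting precisely the terms generated by the extra $[B,M_R]$; symmetrically $\d^*_{m,l}B=[-A_-,B]-R_B$.

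Next I would dispose of the principal (non-$R$) terms. Substituting the two derivatives into the reduced bracket and repeatedly using $[A,B]=0$ to rewrite the mixed commutators $[A_\pm,B_\mp]$, the part not involving $R_A,R_B$ telescopes, modulo purely negative pieces that cancel, to $-([A_+,B_+])_-$, which vanishes because $[A_+,B_+]$ is a non-negative operator. Hence
\[
 [\d^*_{m,l},\d^*_{n,k}]\P_L=\bigl((R_B)_--(R_A)_-\bigr)\P_L ,
\]
and the theorem reduces to computing $R_A-R_B$. Writing $[B,M_R]=k\,(M_L-M_R)^n\L^{k-1}+[(M_L-M_R)^n,M_R]\L^k$ (using $[\L,M_R]=1$) and the analogous expansion of $[A,M_R]$, the first summands contribute $mk\,(M_L-M_R)^{m+n-1}\L^{k+l-1}$ to $R_A$ and $nl\,(M_L-M_R)^{m+n-1}\L^{k+l-1}$ to $R_B$.

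The delicate point, and the step I expect to be the real obstacle, is the leftover ``$D$-terms'' built from $D:=[M_L-M_R,M_R]$, which are not manifestly innocuous and whose explicit form I would rather not compute. My plan is to sidestep $D$ entirely: after commuting all the (mutually commuting) factors of $M_L-M_R$ together, the $D$-part of $R_A$ becomes $\sum_{i,j}(M_L-M_R)^{\,i+j}\,D\,(M_L-M_R)^{\,m+n-2-i-j}\L^{k+l}$ summed over $0\le i\le m-1$, $0\le j\le n-1$, while the $D$-part of $R_B$ is the identical expression with the two index ranges swapped. Since the number of pairs with fixed $i+j=s$ is symmetric under $m\leftrightarrow n$, the two sums coincide and cancel. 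Therefore $R_A-R_B=(km-nl)\,(M_L-M_R)^{m+n-1}\L^{k+l-1}$ exactly, giving
\[
 [\d^*_{m,l},\d^*_{n,k}]\P_L=-(km-nl)\bigl((M_L-M_R)^{m+n-1}\L^{k+l-1}\bigr)_-\P_L=(km-nl)\,\d^*_{m+n-1,k+l-1}\P_L ,
\]
which is \eqref{algebra relation} on $\P_L$. I would close by remarking that the $\P_R$ case is word-for-word the same with $\d^*_{n,k}\P_R=B_+\P_R$ and every $(\,)_-$ replaced by $(\,)_+$, and that the case of $\L$ then follows since the flows are defined through the dressing operators.
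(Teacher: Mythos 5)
Your proof is correct and follows essentially the same route as the paper's: expand $[\partial_{t^*_{m,l}},\partial_{t^*_{n,k}}]\P_L$ by the Leibniz rule using \eqref{definitionadditionalflowsonphi2} and Proposition \ref{add flow}, exploit $[\L,M_L-M_R]=0$ and $[\L,M_R]=1$, and identify the outcome as $(km-nl)\d^*_{m+n-1,k+l-1}\P_L$. The paper compresses all cancellations into one unexplained step, so your explicit verification that the principal (non-$R$) terms vanish via $[A,B]=0$ and the projection identities, and especially the symmetric cancellation of the $[M_L,M_R]$-terms between $R_A$ and $R_B$, supplies precisely the details the paper's proof leaves implicit.
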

\begin{proof}
 By using
 (\ref{definitionadditionalflowsonphi2}), we get
\begin{eqnarray*}
[\partial_{t^*_{m,l}},\partial_{t^*_{n,k}}]\P_L&=&
\partial_{t^*_{m,l}}(\partial_{t^*_{n,k}}\P_L)-
\partial_{t^*_{n,k}}(\partial_{t^*_{m,l}}\P_L)\\
&=&-\partial_{t^*_{m,l}}\left(((M_L-M_R)^n\L^k)_{-}\P_L\right)
+\partial_{t^*_{n,k}}\left(((M_L-M_R)^m\L^l)_{-}\P_L\right)\\
&=&-(\partial_{t^*_{m,l}}
(M_L-M_R)^n\L^k)_{-}\P_L-((M_L-M_R)^n\L^k)_{-}(\partial_{t^*_{m,l}} \P_L)\\
&&+ (\partial_{t^*_{n,k}} (M_L-M_R)^m\L^l)_{-}\P_L+
((M_L-M_R)^m\L^l)_{-}(\partial_{t^*_{n,k}} \P_L).
\end{eqnarray*}
On the account of  Proposition\ref{add flow}, we further get
 \begin{eqnarray*}&&
[\partial_{t^*_{m,l}},\partial_{t^*_{n,k}}]\P_L\\
&=&-\Big[\sum_{p=0}^{n-1}
(M_L-M_R)^p(\partial_{t^*_{m,l}}(M_L-M_R))(M_L-M_R)^{n-p-1}\L^k
+(M_L-M_R)^n(\partial_{t^*_{m,l}}\L^k)\Big]_{-}\P_L\\&&-((M_L-M_R)^n\L^k)_{-}(\partial_{t^*_{m,l}} \P_L)\\
&&+\Big[\sum_{p=0}^{m-1}
(M_L-M_R)^p(\partial_{t^*_{n,k}}(M_L-M_R))(M_L-M_R)^{m-p-1}\L^l
+(M_L-M_R)^m(\partial_{t^*_{n,k}}\L^l)\Big]_{-}\P_L\\&&+
((M_L-M_R)^m\L^l)_{-}(\partial_{t^*_{n,k}} \P_L)\\
&=&[(nl-km)(M_L-M_R)^{m+n-1}\L^{k+l-1}]_-\P_L\\
&=&(km-nl)\d^*_{m+n-1,k+l-1}\P_L.
\end{eqnarray*}
Similarly  the same results on $\P_R$ and $\L$ are as follows
 \begin{eqnarray*}
[\partial_{t^*_{m,l}},\partial_{t^*_{n,k}}]\P_R
&=&((km-nl)(M_L-M_R)^{m+n-1}\L^{k+l-1})_+\P_R\\
&=&(km-nl)\d^*_{m+n-1,k+l-1}\P_R,
\\[6pt]
{}[\partial_{t^*_{m,l}},\partial_{t^*_{n,k}}]\L&=&
\partial_{t^*_{m,l}}(\partial_{t^*_{n,k}}\L)-
\partial_{t^*_{n,k}}(\partial_{t^*_{m,l}}\L)\\
&=&[((nl-km)(M_L-M_R)^{m+n-1}\L^{k+l-1})_-, \L]\\
&=&(km-nl)\d^*_{m+n-1,k+l-1}\L.
\end{eqnarray*}
\end{proof}
Denote $d_{m,l}=\d_{t^*_{m+1,l+1}}$, and let $\BB$ be the span of all $d_{m,l},\,m,l\ge-1$.
Then by \eqref{algebra relation}, $\BB$ is a Lie algebra with relations
\begin{eqnarray}
[d_{m,l},d_{n,k}]=((m+1)(k+1)-(l+1)(n+1))d_{m+n,l+k}, \ \mbox{ \ for \ } m,n,l,k\geq -1.
\end{eqnarray}
Thus $\BB$ is in fact a Block type Lie algebra [\ref{Block}--\ref{Su}] (or more precisely the
upper infinite part of a Block type Lie algebra \cite{Block}). We see that
$\BB$ is generated by the set
\begin{eqnarray}\label{Gen-B}
B=\{ d_{-1,0},
d_{0,-1}, d_{0,0}, d_{1,0}, d_{0,1}\}=\{ \d^*_{0,1}=\d_{0,0}=\d_{1,0}, \d^*_{1,0}, \d^*_{1,1}, \d^*_{2,1},
\d^*_{1,2}\}.
\end{eqnarray}
%
%
%
%
It is easily to see that
$\{d_{m,0}, m\ge -1\}$ and $\{d_{0,m}, m\ge-1\} $ both span half of the centerless Virasoro algebra or Witt algebra.
It is a challenging work to get some explicit representations of general Block type Lie algebras. Actually, as we shall show
in later sections  it is highly nontrivial to do this even for  $\BB$.

To get the intuitive understanding of the properties of the additional symmetries of the BTH, we
would like to  give the first few flows of additional symmetry  in the following. These examples show that
additional symmetry flows indeed explicitly depend on the coordinate variables $t_{\gamma,n}$ and $x$.
For the $t^*_{1,0}$ flow,  \eqref{ETHadditionalflow1111} implies
\begin{eqnarray*}
\dfrac{\partial
\L}{\partial{t^*_{1,0}}}&=&[-\left(M_L-M_R\right)_{-}, \L]\\
&=&1+\sum_{n> 0}\sum_ {\alpha=1}^{N}(n+1-\frac{\alpha-1}{N})t_{\alpha, n}\d_{\alpha,
n-1}\L+\sum_{n>0}\sum_ {\beta=-M+1}^{0}(n+1+\frac{\beta}{M})t_{\beta, n}\d_{\beta,
n-1}\L,
\end{eqnarray*}
which further leads to
\begin{eqnarray*}
\begin{cases}
\dfrac{\partial u_{i}}{\partial{t^*_{1,0}
}} &=-\sum\limits_{n> 0}\sum\limits_
{\alpha=1}^{N}(n+1-\frac{\alpha-1}{N})t_{\alpha, n}\d_{\alpha,n-1}u_{i}-\sum\limits_{n> 0}
\sum\limits_ {\beta=-M+1}^{0}(n+1+\frac{\beta}{M})t_{\beta, n}\d_{\beta, n-1}u_{i},\ \ \ i\neq 0,\\
\dfrac{\partial u_{0}}{\partial{t^*_{1,0}
}} &=1-\sum\limits_{n> 0}\sum\limits_
{\alpha=1}^{N}(n+1-\frac{\alpha-1}{N})t_{\alpha, n}\d_{\alpha,n-1}u_{0}-\sum\limits_{n> 0}\sum\limits_
{\beta=-M+1}^{0}(n+1+\frac{\beta}{M})t_{\beta, n}\d_{\beta, n-1}u_{0}.
\end{cases}
\end{eqnarray*}
Similarly, after some computations, we get the $t^*_{1,1}$ flow
\begin{eqnarray*}
\dfrac{\partial
\L}{\partial{t^*_{1,1}
}}
&=&\sum_{i\geq 1}^{N+M}\frac{i}{N}u_{N-i}\Lambda^{N-i}-\sum_{n\geq 0}\sum_
{\alpha=1}^{N}(n+1-\frac{\alpha-1}{N})t_{\alpha,
n}\d_{\alpha,n}\L\\
&&-\sum_{n\geq 0}\sum_
{\beta=-M+1}^{0}(n+1+\frac{\beta}{M})t_{\beta, n}\d_{\beta, n}\L,
\end{eqnarray*}
which leads to
 \begin{eqnarray*}
\dfrac{\partial u_{i}}{\partial{t^*_{1,1}
}}
&=&\frac{N-i}{N}u_{i}-\sum_{n\geq 0}\sum_
{\alpha=1}^{N}(n+1-\frac{\alpha-1}{N})t_{\alpha,
n}\d_{\alpha,n}u_{i}-\sum_{n\geq 0}\sum_
{\beta=-M+1}^{0}(n+1+\frac{\beta}{M})t_{\beta, n}\d_{\beta, n}u_{i},
\end{eqnarray*} for $-M\leq i\leq N-1.$

Other equations of additional symmetry can be got in similar ways which will not be mentioned here.


\sectionnew{ Block type actions  on  functions $P_L$ and  $P_R$}

In this section,  we shall
give some specific additional flows acting on wave operators $\P_L$, $\P_R$
and then on  functions $P_L$, $P_R$ which denote the symbol of $\P_L$and $\P_R$ respectively
(see \eqref{PLsymbol} and \eqref{PRsymbol}).

As we all know, W-constraint which means  additional flows of wave function become vanishing is a very useful tool to bring  action of additional flow on wave functions into action on tau function. But for the BTH, it is not easy to get the action on tau function. Even Virasoro constraint is still difficult to get which will be shown in the last section of this paper.  So we choose another  kind of constraints which are weaker than the W-constraints to give representations of the BTH. It is named weak W-constraints which will be shown later. By these specific actions on functions $P_L$, $P_R$,
we will present two representations of the Block type Lie algebra $\BB$ under this kind of so-called weak W-constraints.

By definition, then
\begin{eqnarray}
\dfrac{\partial
\P_L}{\partial{t^*_{0,1}}}&=&-\L_{-}\P_L= \partial_{0,0}\P_L,
\end{eqnarray}
which leads to
\begin{eqnarray}
\dfrac{\partial
P_L}{\partial_{t^*_{0,1}}}&=&A^*_{L0,1}P_L,
\mbox{ \ where}\notag\\
\label{L0,1}
A^*_{L0,1}&=& \partial_{0,0}.
\end{eqnarray}
Considering
\eqref{definitionadditionalflowsonphi2},
 the $t^*_{1,0}$ flow can be written as following
\begin{eqnarray*}
\dfrac{\partial
\P_L}{\partial{t^*_{1,0}}}&=&-\left(M_L-M_R\right)_{-}\P_L\\
&=&-\P_L\frac{x}{N\epsilon}\Lambda^{-N}-\sum_ {\alpha=2}^{N}t_{\alpha, 0}(1-\frac{\alpha-1}{N})\P_L\La^{1-\alpha}+\sum_{n> 0}\sum_ {\alpha=1}^{N}(n+1-\frac{\alpha-1}{N})t_{\alpha, n}\d_{\alpha,
n-1}\P_L
\\&&+\sum_{n>0}\sum_ {\beta=-M+1}^{0}(n+1+\frac{\beta}{M})t_{\beta, n}\d_{\beta,
n-1}\P_L.
\end{eqnarray*}
Because $[\P_L, \frac{x}{\epsilon}]z^{\frac{x}{\epsilon}}=z(\d_zP_L)z^{\frac{x}{\epsilon}}$,
\begin{eqnarray*}
\dfrac{\partial
P_L}{\partial{t^*_{1,0}}}z^{\frac{x}{\epsilon}}&=&\dfrac{\partial
\P_L}{\partial{t^*_{1,0}}}z^{\frac{x}{\epsilon}}\\
&=&\Big[-(\frac{z^{1-N}}{N}\d_zP_L) -\frac{x}{N\epsilon}z^{-N}P_L-\sum_ {\alpha=2}^{N}t_{\alpha, 0}(1-\frac{\alpha-1}{N})z^{1-\alpha}P_L\\&&+\sum_{n>0}\sum_ {\alpha=1}^{N}(n+1-\frac{\alpha-1}{N})t_{\alpha, n}(\d_{\alpha,
n-1}P_L)+\sum_{n>0}\sum_ {\beta=-M+1}^{0}(n+1+\frac{\beta}{M})t_{\beta, n}(\d_{\beta,
n-1}P_L)\Big]z^{\frac{x}{\epsilon}}.
\end{eqnarray*}
Then  the $t^*_{1,0}$ flow of  function $P_L$ is given as follows
\begin{eqnarray}
\dfrac{\partial
P_L}{\partial{t^*_{1,0}}}
&=&A^*_{L1,0}P_L,
\mbox{ \ where}\notag\\
\notag
A^*_{L1,0}
&=&-\frac{z^{1-N}}{N}\d_z -\frac{x}{N\epsilon}z^{-N}-\sum_ {\alpha=2}^{N}t_{\alpha, 0}(1-\frac{\alpha-1}{N})z^{1-\alpha}+\sum_{n>0}\sum_ {\alpha=1}^{N}(n+1-\frac{\alpha-1}{N})t_{\alpha, n}\d_{\alpha,
n-1}\\&&+\sum_{n>0}\sum_ {\beta=-M+1}^{0}(n+1+\frac{\beta}{M})t_{\beta, n}\d_{\beta,
n-1}.
\label{L1,0}
\end{eqnarray}
By the same calculation, we can get the $t^*_{1,1}$ flow of wave  operator $\P_L$ and function $P_L$ as follows
\begin{eqnarray}
\dfrac{\partial
\P_L}{\partial{t^*_{1,1}
}}
&\!\!\!=\!\!\!&-\P_L\frac{x}{N\epsilon}+\frac{x}{N\epsilon}\P_L-\sum_{n\geq
0}\sum_ {\alpha=1}^{N}(n+1-\frac{\alpha-1}{N})t_{\alpha,
n}\d_{\alpha,n}\P_L\notag\\
&&-\sum_{n\geq 0}\sum_ {\beta=-M+1}^{0}(n+1+\frac{\beta}{M})t_{\beta,
n}\d_{\beta, n}\P_L,
\notag\\[5pt]
%
%
\dfrac{\partial
P_L}{\partial{t^*_{1,1}
}}
&\!\!\!=\!\!\!&A^*_{L1,1}P_L,
\mbox{ \ where}\notag\\
\label{L1,1}
\!\!A^*_{L1,1}
&\!\!\!=\!\!\!&-\frac{z}{N}\d_z+\sum_{n\geq
0}\sum_ {\alpha=1}^{N}(n+1-\frac{\alpha-1}{N})t_{\alpha,
n}\d_{\alpha,n}+\sum_{n\geq 0}\sum_ {\beta=-M+1}^{0}(n+1+\frac{\beta}{M})t_{\beta,
n}\d_{\beta, n}.
\end{eqnarray}
Similarly, the $t^*_{1,2}$ flow of wave  operator $\P_L$ and  function $P_L$ can be got  as follows
\begin{eqnarray}
\!\!\!\dfrac{\partial
\P_L}{\partial{t^*_{1,2}
}}
&\!\!\!=\!\!\!&-\P_L\frac{x}{N\ep}\La^N+\sum_{i+j\leq N}\omega_i\La^{-i}\frac{x}{N\epsilon}\La^N\La^{-j}\omega'_j\P_L+\sum_{n\geq
0}\sum_
{\alpha=1}^{N}(n+1-\frac{\alpha-1}{N})t_{\alpha,
n}\d_{\alpha,
n+1}\P_L\notag\\
\!\!\!&&-\sum_{i+j< M}\tilde\omega_i\La^{i}\frac{x}{M\epsilon}\La^{-M}\La^{j}\tilde\omega'_j\P_L+\sum_{n\geq
0}\sum_ {\beta=-M+1}^{0}(n+1+\frac{\beta}{M})t_{\beta, n}\d_{\beta, n+1}\P_L,
\notag\\[5pt]
%
%
\!\!\!\dfrac{\partial
P_L}{\partial{t^*_{1,2}
}}
&\!\!\!=\!\!\!&A^*_{L1,2}P_L,
\mbox{ \ where}\notag\\
\notag
A^*_{L1,2}
&=&-\frac{z^{N+1}}{N}\d_z+(\frac{x}{N\ep}+\frac{x}{M\ep})\d_{t_{1,0}}-\frac{1}{N}\sum_{i+j\leq N}i\omega_iz^{N-i-j}\omega'_j(x+(N-i-j)\ep) \La^{N-i-j}\\ \notag
&&+\sum_{n\geq
0}\sum_
{\alpha=1}^{N}(n+1-\frac{\alpha-1}{N})t_{\alpha,
n}\d_{\alpha,
n+1}+\sum_{n\geq
0}\sum_ {\beta=-M+1}^{0}(n+1+\frac{\beta}{M})t_{\beta, n}\d_{\beta, n+1}\\ \label{L1,2'}
&&-\frac{1}{M}\sum_{i+j< M}i\tilde\omega_iz^{i+j-M}\tilde\omega'_j(x+(i+j-M)\ep) \La^{i+j-M}.
\end{eqnarray}
The general Virosora flows are
\begin{eqnarray}
\dfrac{\partial
P_L}{\partial{t^*_{1,m+1}
}}
&=&A^*_{L1,m+1}P_L, \ \ m=0,1,...,
\mbox{ \ where}\notag\\
%
\label{L1,m+1}
A^*_{L1,m+1}
&=&-\frac{z^{mN+1}}{N}\d_z+(\frac{x}{N\ep}+\frac{x}{M\ep})\d_{t_{1,m-1}}
\notag\\&&-
\frac{1}{N}\sum_{i+j\leq mN}i\omega_iz^{mN-i-j}\omega'_j(x+(m N-i-j)\ep) \La^{mN-i-j}\notag\\
&&+\sum_{n\geq
0}\sum_
{\alpha=1}^{N}(n+1-\frac{\alpha-1}{N})t_{\alpha,
n}\d_{\alpha,
n+m}+\sum_{n\geq
0}\sum_ {\beta=-M+1}^{0}(n+1+\frac{\beta}{M})t_{\beta, n}\d_{\beta, n+m}\notag\\&&-
\frac{1}{M}\sum_{i+j< mM}i\tilde\omega_iz^{i+j-mM}\tilde\omega'_j(x+(i+j-m M)\ep) \La^{i+j-mM}\notag.
\end{eqnarray}

Now we will consider the \emph{weak W-constraints}\footnote{The weak W-constraints mean that the coefficients $\{\om_i, \om'_i, \tilde \om_i, \tilde \om'_i,i\geq 0\}\ (\om_0=\om'_0=1)$ satisfy the vanishing property of  derivatives of all operators $A^*_{Lm,n}$ and  $A^*_{Rm,n}$(will be introduced later) with respect to additional time variables, i.e.
 \begin{eqnarray}\label{weakwconstraint}
\d^*_{k,s}A^*_{Lm,n}&=&\d^*_{k,s}A^*_{Rm,n}=0,
\end{eqnarray}
for all $k,s,m,n\geq 0$.  W-constraints can imply weak W-constraints. Therefore condition under weak W-constraints  is weaker than W-constraints.}.
For example, condition
 \begin{eqnarray}\label{weakwconstraint1}
\d^*_{k,s}A^*_{L1,2}=0,
\end{eqnarray}
will lead to the following constraint
 \begin{eqnarray}\label{weakwconstraint2}
&&\d^*_{k,s}\left(\sum_{i+j=l\leq N}i\omega_i\omega'_j(x+(N-l)\ep)\right)
=0\\
&&\d^*_{k,s}\left(\sum_{i+j=s< M}i\tilde\omega_i\tilde\omega'_j(x+(s-M)\ep)\right)=0,
\end{eqnarray}
where $k,s,l,s\geq 0$.

Considering the weak W-constraints \eqref{weakwconstraint}, these operators $\{A^*_{L1,m}, m\geq 0\}$ can be regarded as a representation of the Virasoro algebra acting on $P_L$ function space, i.e.,
\begin{eqnarray}
[A^*_{L1,m},A^*_{L1,n}]P_L=(m-n)A^*_{L1,m+n-1}P_L.
\end{eqnarray}

After a tedious calculation, we can get the $t^*_{2,1}$ flow  on  function $P_L$
\begin{eqnarray}\label{L2,1}
\dfrac{\partial
P_L}{\partial{t^*_{2,1}
}}=A^*_{L2,1}P_L,
\end{eqnarray}
where $A^*_{L2,1}$ has very complicated form which will be presented in \eqref{L2,1a} in the appendix.

Till now, we have given all generating elements \eqref{Gen-B} of the Block type Lie algebra $\BB$ acting on function $P_L$,  i.e. $\{A^*_{L0,1}, A^*_{L1,0}, A^*_{L1,1}, A^*_{L2,1}, A^*_{L1,2}\}$ in \eqref{L0,1}--\eqref{L2,1}.
Thus we have in fact given a representation of $\BB$ on $P_L$ function space.

Similarly  corresponding  flows of  $P_R$ can be got as follows.
Firstly,
by definition,
\begin{eqnarray*}
\dfrac{\partial
\P_R}{\partial{t^*_{0,1}}}&=&\L_{+}\P_R= \partial_{1,0}\P_R,
\end{eqnarray*}
which leads to
\begin{eqnarray}
\dfrac{\partial
P_R}{\partial_{t^*_{0,1}}}&=&A^*_{R0,1}P_R,
\mbox{ \ where}\notag\\
\label{R0,1}
A^*_{R0,1}&=& \partial_{1,0}.
\end{eqnarray}
Secondly, by the following calculation
\begin{eqnarray*}
\dfrac{\partial
\P_R}{\partial{t^*_{1,0}}}&=&\left(M_L-M_R\right)_{+}\P_R\\
&=&\P_R\frac{x}{M\epsilon}\Lambda^{M}+\sum_{n>
0}\sum_ {\alpha=1}^{N}(n+1-\frac{\alpha-1}{N})t_{\alpha, n}\d_{\alpha, n-1}\P_R+t_{1, 0}\P_R\\
&&+\sum_{n>
0}\sum_ {\beta=-M+1}^{0}(n+1+\frac{\beta}{M})t_{\beta, n}\d_{\beta, n-1}\P_R+\sum_ {\beta=-M+1}^{0}(1+\frac{\beta}{M})t_{\beta, 0}\P_R\Lambda^{-\beta},
\end{eqnarray*}
and considering  $[\P_R, \frac{x}{\epsilon}]z^{\frac{x}{\ep}}=z(\d_zP_R)z^{\frac{x}{\ep}}$, we can get

\begin{eqnarray}\label{R1,0}
\dfrac{\partial
P_R}{\partial{t^*_{1,0}}}&=&A^*_{R1,0}P_R,
\mbox{ \ where}\notag\\
A^*_{R1,0}
&=&\frac{z^{M+1}}{M}\d_z+\frac{x}{M\epsilon}z^M+\sum_{n>
0}\sum_ {\alpha=1}^{N}(n+1-\frac{\alpha-1}{N})t_{\alpha, n}\d_{\alpha, n-1}+t_{1, 0}\notag\\&&+\sum_{n>
0}\sum_ {\beta=-M+1}^{0}(n+1+\frac{\beta}{M})t_{\beta, n}\d_{\beta, n-1}+\sum_ {\beta=-M+1}^{0}(1+\frac{\beta}{M})t_{\beta, 0}z^{-\beta}.
\end{eqnarray}
In the same way,
\begin{eqnarray}\label{R1,1}
\dfrac{\partial
P_R}{\partial{t^*_{1,1}
}}
&=&A^*_{R1,1}P_R,
\mbox{ \ where}\notag\\
A^*_{R1,1}
&=&\frac{z}{M}\d_z+\frac{x}{M\epsilon}+\frac{x}{N\epsilon}+\sum_{n\geq
0}\sum_ {\alpha=1}^{N}(n+1-\frac{\alpha-1}{N})t_{\alpha,
n}\d_{\alpha,n}\notag\\
&&+\sum_{n\geq 0}\sum_ {\beta=-M+1}^{0}(n+1+\frac{\beta}{M})t_{\beta,
n}\d_{\beta, n}.
\end{eqnarray}
Similarly, the $t^*_{1,2}$  flows of $P_R$ can be derived as follows
\begin{eqnarray}\label{R1,2}
\dfrac{\partial
P_R}{\partial{t^*_{1,2}
}}
&=&A^*_{R1,2}P_R,
\mbox{ \ where}\notag\\
A^*_{R1,2}
&=&(\frac{x}{N\ep}+\frac{x}{M\ep})\d_{t_{1,0}}-\frac{1}{N}\sum_{i+j\leq N}i\omega_iz^{N-i-j}\omega'_j(x+(N-i-j)\ep) \La^{N-i-j}\notag\\
&&+\sum_{n\geq
0}\sum_
{\alpha=1}^{N}(n+1-\frac{\alpha-1}{N})t_{\alpha,
n}\d_{\alpha,
n+1}+\sum_{n\geq
0}\sum_{\beta=-M+1}^{0}(n+1+\frac{\beta}{M})t_{\beta, n}\d_{\beta, n+1}\notag\\
&&+\frac{1}{M}\sum_{i+j\geq M}i\tilde\omega_iz^{i+j-M}\tilde\omega'_j(x+(i+j-M)\ep) \La^{i+j-M}\notag.
\end{eqnarray}
Similarly, after a tedious calculation, we obtain the $t^*_{2,1}$ flow function $P_R$  as follows
\begin{eqnarray}\label{R2,1}
\dfrac{\partial
P_R}{\partial{t^*_{2,1}
}}=A^*_{R2,1}P_R,
\end{eqnarray}
where $A^*_{R2,1}$ has complicated form which will also be presented presented in \eqref{R2,1a} in the appendix.

Denote $A^*_{Lm+1,l+1}$ and $A^*_{Rm+1,l+1}$ by $D^*_{Lm,l}$ and $D^*_{Rm,l}$ respectively.
Considering the weak W-constraints  of  functions $P_L(t,z)$ and $P_R(t,z)$,
there exist two anti-isomorphisms between $\{d^*_{m,l}\}$ and $\{D^*_{Lm,l}\}$,  $\{d^*_{m,l}\}$ and $\{D^*_{Rm,l}\}$ respectively
as follows
\begin{eqnarray*}
P_L:\ \ \ \ \ \ \ \ \  d^*_{m,l}&\mapsto& D^*_{Lm,l}\\
\[d^*_{m,l},d^*_{n,k}\]&\mapsto& \[D^*_{Ln,k},D^*_{Lm,l}\],
\\[5pt]
P_R:\ \ \  \ \ \ \ \ d^*_{m,l}&\mapsto& D^*_{Rm,l}\\
\[d^*_{m,l},d^*_{n,k}\]&\mapsto& \[D^*_{Rn,k},D^*_{Rm,l}\].
\end{eqnarray*}
The anti-isomorphisms under Lie bracket are
\begin{eqnarray*}
\[d^*_{m,l},d^*_{n,k}\]P_L=\[D^*_{Ln,k},D^*_{Lm,l}\]P_L,
\mbox{ \ \ and \ \ }
\[d^*_{m,l},d^*_{n,k}\]P_R=\[D^*_{Rn,k},D^*_{Rm,l}\]P_R.
\end{eqnarray*}

We can  get other additional flows on wave operators and  functions by very tedious calculations using recursion relation, which will not be mentioned here. It would be great to get the actions of additional flows on $\tau$ function which can lead to the Virasoro constraints.
Although it is not easy to bring action onto tau functions under W-constraints, we can also get the following important theorem.
\begin{theorem}\label{Rep-Block}
\begin{itemize}\item[\rm(1)]Equations \eqref{L0,1}--\eqref{L2,1} give a representation of $\BB$ on $P_L$ function space.
\item[\rm(2)]Equations \eqref{R0,1}--\eqref{R2,1} give a representation of $\BB$ on $P_R$ function space.
\end{itemize}
\end{theorem}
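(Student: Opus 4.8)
The plan is to establish both statements simultaneously by verifying that the correspondence $d^*_{m,l}\mapsto D^*_{Lm,l}$ (respectively $d^*_{m,l}\mapsto D^*_{Rm,l}$) is a Lie algebra anti-homomorphism, since a faithful action of $\BB$ on the function spaces follows once the bracket relation \eqref{algebra relation} is respected up to the anti-isomorphism already announced just before the theorem. Concretely, I must show that for the explicit differential/difference operators $A^*_{Lm+1,l+1}$ defined in \eqref{L0,1}--\eqref{L2,1}, one has
\begin{eqnarray*}
[A^*_{Ln,k},A^*_{Lm,l}]P_L=(km-nl)\,A^*_{Lm+n-1,k+l-1}P_L,
\end{eqnarray*}
and the analogous identity with $L$ replaced by $R$. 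This matches, under the anti-isomorphism $d^*_{m,l}\mapsto D^*_{Lm,l}$, the abstract relation \eqref{algebra relation} established in Theorem \ref{WinfiniteCalgebra}.

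First I would reduce the problem to the generators. Since $\BB$ is generated by the finite set $B$ in \eqref{Gen-B}, it suffices to check the bracket relations among the five corresponding operators $\{A^*_{L0,1},A^*_{L1,0},A^*_{L1,1},A^*_{L2,1},A^*_{L1,2}\}$ and then invoke the Jacobi identity to propagate the relation to arbitrary $d^*_{m,l}$. The structural engine is already in place: Theorem \ref{WinfiniteCalgebra} shows $[\partial_{t^*_{m,l}},\partial_{t^*_{n,k}}]=(km-nl)\partial^*_{m+n-1,k+l-1}$ as operators acting on $\P_L$, and the passage from $\P_L$ to its symbol $P_L$ via $\P_L z^{x/\epsilon}=P_L z^{x/\epsilon}$ (used repeatedly in deriving \eqref{L1,0}, \eqref{L1,1}, \eqref{L1,2'}) converts each additional flow $\partial_{t^*_{m,l}}$ into the action of $A^*_{Lm,l}$ on $P_L$. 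Thus the commutator of two additional flows on $\P_L$ translates directly into the commutator of the symbol operators on $P_L$, with the order reversed because the symbol map is an anti-homomorphism (operator composition becomes symbol multiplication in the opposite order). This explains and forces the anti-isomorphism stated just before the theorem.

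The essential input that makes the computation close is the weak W-constraint \eqref{weakwconstraint}: by assumption $\partial^*_{k,s}A^*_{Lm,n}=\partial^*_{k,s}A^*_{Rm,n}=0$, so when I expand $[A^*_{Ln,k},A^*_{Lm,l}]P_L$ using Leibniz, all terms in which an additional flow differentiates the coefficients $\{\omega_i,\omega'_i,\tilde\omega_i,\tilde\omega'_i\}$ appearing inside the $A^*$ operators drop out. This is precisely what renders a representation possible despite the nonlocal coefficients, and it is why the weaker constraint suffices in place of the full W-constraint. After discarding those terms, the surviving commutator is computed purely from the manifest $z$-differential part $-\tfrac{z^{mN+1}}{N}\partial_z$ together with the linear-in-$t$ vector fields $\sum(n+1-\tfrac{\alpha-1}{N})t_{\alpha,n}\partial_{\alpha,n+m}$, whose bracket reproduces the Witt-type structure constant $(km-nl)$.

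The main obstacle will be verifying the two brackets involving $A^*_{L2,1}$, whose explicit form is deferred to \eqref{L2,1a} in the appendix and is genuinely complicated. The brackets $[A^*_{L1,0},A^*_{L2,1}]$ and $[A^*_{L0,1},A^*_{L2,1}]$ must reproduce $A^*_{L2,0}$ and $A^*_{L1,1}$ respectively with the correct coefficients, and here the cross-terms between the $z^{-N}$-type pieces and the difference-operator pieces carrying the coefficients $\omega_i,\omega'_j$ do not vanish automatically; they must cancel using the weak W-constraint relation \eqref{weakwconstraint2} on the combinations $\sum_{i+j=l}i\omega_i\omega'_j(x+(N-l)\epsilon)$. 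I would isolate this as the technical heart of the proof, carry out the $R$-side computation symmetrically (noting the sign change from the $(\,\cdot\,)_+$ projection in \eqref{definitionadditionalflowsonphi2}), and leave the fully expanded generator-level identities to the appendix while recording in the main text that the five generating brackets have been checked and that the Jacobi identity then yields the full representation of $\BB$.
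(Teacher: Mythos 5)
Your proposal is correct and follows essentially the same route as the paper's own (largely implicit) justification: the paper likewise computes only the five generator operators \eqref{L0,1}--\eqref{L2,1} and \eqref{R0,1}--\eqref{R2,1}, imposes the weak W-constraints \eqref{weakwconstraint} so that additional flows do not differentiate the coefficients $\{\omega_i,\omega'_i,\tilde\omega_i,\tilde\omega'_i\}$ inside the $A^*$ operators, and transports the flow algebra of Theorem \ref{WinfiniteCalgebra} through the symbol map to obtain the anti-isomorphisms $d^*_{m,l}\mapsto D^*_{Lm,l}$ and $d^*_{m,l}\mapsto D^*_{Rm,l}$. The one caveat is that your ``check the five generator brackets, then propagate by the Jacobi identity'' framing is by itself insufficient (bracket relations among generators do not determine a representation without a presentation of $\BB$), but this does not damage the proof, since your own structural engine --- Theorem \ref{WinfiniteCalgebra} combined with the symbol correspondence under the weak W-constraints --- already yields the bracket relation for \emph{all} pairs $(m,l),(n,k)$, which is exactly how the paper concludes.
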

It should be  remarked that the two representations in Theorem \ref{Rep-Block} are both under the weak W-constraints.
And also we would like to emphasize that  it is quite nontrivial to give representations of Block Lie algebra in Theorem \ref{Rep-Block} from the point of representation of infinite dimensional Lie algebra. It is difficult to give one representation of this Block Lie algebra but luckily we  get it from the first few additional flows of the BTH.

\sectionnew{Conclusions and discussions}
 We define Orlov-Schulman's $M_L$,
$M_R$ operators of BTH in Section 4 and give the  additional symmetries of the BTH in Section 5. Then
we find that the additional symmetries form one  kind of Block type
Lie algebra which has recently received much attention.
In Section 6, we obtain some flows on functions $P_L$ and $P_R$ which lead to two
representations of this Block type Lie algebra $\BB$ which contains the half centerless Virasoro algebra (or Witt
algebra) as a subalgebra.

The additional symmetries provide a very useful way to derive an explicit representations
of the Virasoro algebra for KP hierarchy \cite{os1}, BKP hierarchy
\cite{MingHsien Tu, MingHsien Tu2} and 2-D Toda hierarchy
\cite{adler95} by its action on the space of $\tau$ function. This representation is more elegant and simple
by comparing with its representation defined on the space of functions $P_L$, $P_R$.

Therefore we tried to get an explicit representation of the algebra
$\BB$ in a similar way through the actions of the
additional symmetries on the $\tau$ function of the BTH.
Due to some technical reasons, it does not seem easy to derive
additional actions on $\tau$ function under the  W-constraints, we
have thus only found the first few operators of the algebra
$\BB$. There are two ways to get the Virasoro
representations from the point of view of the integrable system.
 One way is to use ASvM formula \cite{adler95} and the other one is to use similar forms \cite{dickey1}.
Here we would like to use the second way to derive first two
operators in algebra $\BB$ from acting on $\tau$
function. Firstly, to this purpose, it is useful to rewrite its
action on the functions $P_L$, $P_R$ as a similar form \cite{dickey1}.
  Because the $\tau$ function of BTH is defined by $P_L$ as follows \cite{our paper}
  \begin{eqnarray}\label{pltau}P_L: &=&\frac{ \tau
(x, t-[z^{-1}]^N;\epsilon) }
     {\tau (x,t;\epsilon)},
     \end{eqnarray}
     where
     \begin{eqnarray} \notag
 \[z^{-1}\]^{N}_{\alpha,n} :=
\begin{cases}
  \frac{z^{-N(n+1-\frac{\alpha-1}{N})}}{N(n+1 - \frac{\alpha-1}{N}
)}, &\alpha=N,N-1,\dots 1,\\
 0, &\alpha = 0, -1\dots -(M-1),
  \end{cases}
\end{eqnarray}
a technical calculation  \cite{dickey1} can lead to the following identity in a similar form on the
$\tau$ function
\begin{eqnarray*}
\dfrac{\partial
P_L}{\partial{t^*_{1,0}
}}
&=&
A^*_{L1,0}P_L\\
&=&\Big[(\frac{x}{\epsilon}+\frac{N-1}{2})(\tilde t_{1,0}-t_{1,0})\\
&&+\sum_ {\alpha=2}^{N}\frac{(\alpha-1)(N-\alpha+1)}{2N}((t_{\alpha, 0}- \frac{z^{-(N-\alpha+1)}}{N-\alpha+1})(t_{N+2-\alpha, 0}-\frac{ z^{1-\alpha}}{\alpha-1})-
t_{\alpha, 0}t_{N+2-\alpha, 0})\\
&&+\sum_{n>0}\sum_ {\alpha=1}^{N}(n+1-\frac{\alpha-1}{N})(t_{\alpha, n}-\frac{z^{-N(n+1-\frac{\alpha-1}{N})}}{N(n+1 - \frac{\alpha-1}{N}
)})
\frac{\d_{\alpha,
n-1}\tilde \tau}{\tilde\tau}\\
&&-\sum_{n>0}\sum_ {\alpha=1}^{N}(n+1-\frac{\alpha-1}{N})t_{\alpha, n}\frac{\d_{\alpha,
n-1}\tau}{\tau}\\
&&+\sum_{n>0}\sum_ {\beta=-M+1}^{0}(n+1+\frac{\beta}{M})t_{\beta, n}(\frac{\d_{\beta,
n-1}\tilde \tau}{\tilde\tau}-\frac{\d_{\beta,
n-1} \tau}{\tau})\Big]\frac{\tilde\tau}{\tau},
\end{eqnarray*}
where $ \tau$ denotes $\tau(x, t, \ep)$ and $\tilde \tau$ denotes $\tau(x, t-[z^{-1}]^N, \ep)$.

 After supposing
 \begin{eqnarray*}
\dfrac{\partial
P_L}{\partial{t^*_{1,0}
}}
&=&0,
\end{eqnarray*}
and taking the integral constant (in fact this integral constant can depend on $t_{\beta,n}$ variables mentioned before) to be zero,
we can find
\begin{eqnarray*}
\L_{L1,0}\tau
&=&0,
\mbox{ \ where}\\
%
\L_{L1,0}
&=& (\frac{x}{\epsilon}+\frac{N-1}{2})t_{1,0}+\sum_ {\alpha=2}^{N}\frac{(\alpha-1)(N-\alpha+1)}{2N}
t_{\alpha, 0}t_{N+2-\alpha, 0}\\
&&+\sum_{n>0}\sum_ {\alpha=1}^{N}(n+1-\frac{\alpha-1}{N})t_{\alpha, n}\d_{\alpha,
n-1}+\sum_{n>0}\sum_ {\beta=-M+1}^{0}(n+1+\frac{\beta}{M})t_{\beta, n}\d_{\beta,
n-1}.
\end{eqnarray*}
Similarly, by the following calculation
\begin{eqnarray*}
\dfrac{\partial
P_L}{\partial{t^*_{1,1}
}}
&=&
A^*_{L1,1}P_L\\
&=&\Big[\sum_{n\geq
0}\sum_ {\alpha=1}^{N}(n+1-\frac{\alpha-1}{N})(t_{\alpha,
n}-\frac{z^{-N(n+1-\frac{\alpha-1}{N})} }{N(n+1 - \frac{\alpha-1}{N}
)})
\frac{\d_{\alpha,n}\tilde\tau}{\tilde \tau}\\
&&-
\sum_{n\geq
0}\sum_ {\alpha=1}^{N}(n+1-\frac{\alpha-1}{N})t_{\alpha,
n}\frac{\d_{\alpha,n}\tau}{ \tau}+\sum_{n\geq 0}\sum_ {\beta=-M+1}^{0}(n+1+\frac{\beta}{M})t_{\beta,
n}(\frac{\d_{\beta, n}\tilde\tau}{\tilde \tau}-\frac{\d_{\beta, n}\tau}{ \tau})\Big]\frac{\tilde\tau}{ \tau},
\end{eqnarray*}
 the second Virasoro operator can be got as following
\begin{eqnarray*}
\L_{L1,1}
&=&
\sum_{n\geq
0}\sum_ {\alpha=1}^{N}(n+1-\frac{\alpha-1}{N})t_{\alpha,
n}\d_{\alpha,n}+\sum_{n\geq 0}\sum_ {\beta=-M+1}^{0}(n+1+\frac{\beta}{M})t_{\beta,
n}\d_{\beta, n}.
\end{eqnarray*}
Only some of the Virasoro elements can be explicitly constructed because for the higher additional flows, similar forms as in \cite{dickey1} are too complicated. For example, we did not have the
  similar form of the $A^*_{L1,2}$ as in \eqref{L1,2'}. The above are about the actions on $P_L$. Similar situations
occur for the actions on $P_R$, which will not be mentioned here.
 Our future work will contain
ASvM formula for the BTH, which might  further lead to some representations of that Block type Lie algebra.
\vskip9pt

\small \noindent{\bf {Acknowledgments.}}
  {\small    This work is supported by the NSF of China under Grant No.~10971109, 10825101. Jingsong He is also supported by
Program for NCET under Grant No.NCET-08-0515. We thank Professor
Yishen Li (USTC, China) for long-term encouragements and supports.
Chuanzhong Li also thanks Professor Yuji Kodama (OSU, U.S.) and
Professor
 Hsianhua Tseng (OSU, U.S.) for their useful discussion. We also give thanks to referees for their valuable suggestions.}
\vskip20pt

\sectionnew{Appendix}

After a tedious calculation, we can get the $t^*_{2,1}$ flow  on wave function $P_L$,
\begin{eqnarray*}\label{L2,1'}
\dfrac{\partial
P_L}{\partial{t^*_{2,1}
}}=A^*_{L2,1}P_L,
\end{eqnarray*}
where,
{\small
\begin{eqnarray}
&&A^*_{L2,1}\label{L2,1a}\\&=&-\sum_{i=0}^{\infty}\omega_i\frac{x-i\ep}{N\epsilon}\frac{x-i\ep-N\ep}{N\epsilon}z^{-(N+i)}\La^{-(N+i)}\notag\\
&&-\sum_{n\geq 0}\sum_
{\alpha=1}^{N}\sum_{i+j >N(n-\frac{\alpha-1}{N})}
\omega_i(2\frac{x}{N\epsilon}+n-\frac{\alpha-1+2i}{N})z^{N(n-\frac{\alpha-1}{N})-i-j}\notag\\
&&
t_{\alpha, n}\omega'_j(x+[N(n-\frac{\alpha-1}{N})-i-j]\ep)\La^{N(n-\frac{\alpha-1}{N})-i-j}\notag\\
&&+\sum_{n,n'\geq 0}\sum_
{\alpha,\alpha'=1}^{N}(n+1-\frac{\alpha-1}{N})(n'+1-\frac{\alpha'-1}{N})
t_{\alpha, n}t_{\alpha', n'}\d_{\alpha+\alpha'-1,n+n'}\notag\\
&&
-\sum_{n\geq 0}\sum_
{\beta=-M+1}^{0}\sum_{i+j < M(n+\frac{\beta}{M})}(n+1+\frac{\beta}{M})
\tilde\omega_i(2\frac{x}{M\epsilon}-n-\frac{\beta-2i}{M})z^{-M(n+\frac{\beta}{M})+i+j}
t_{\beta, n}\notag\\
&&\tilde\omega'_j(x+[i+j-M(n+\frac{\beta}{M})]\ep)\La^{-M(n+\frac{\beta}{M})+i+j}\notag\\
&&+\sum_{n,n'\geq 0}\sum_
{\beta,\beta'=-M+1}^{0}(n+1+\frac{\beta}{M})(n'+1+\frac{\beta'}{M})
t_{\beta, n}t_{\beta', n'}\d_{\beta+\beta',n+n'}
\notag\\
&&
-\sum_{k+l< i+j+N}\om_i\frac{x-i\ep}{N\epsilon}\om'_j(x-(N+i+j)\ep)\tilde\om_k(x-(N+i+j)\ep)
\frac{x-(N+i+j-k)\ep}{M\epsilon}\notag\\
&&\tilde\om'_{l}(x+(k+l-N-i-j)\ep)z^{k+l-N-i-j}\La^{k+l-N-i-j}\notag\\
&&-\sum_{n\geq 0}\sum_
{\beta=-M+1}^{0}\sum_{k+l< i+j+N+M(n+1+\frac{\beta}{M})}\om_i\frac{x-i\ep}{N\epsilon}\om'_j(x-(N+i+j)\ep)\tilde\om_k(x-(N+i+j)\ep)\notag\\
&&
(n+1+\frac{\beta}{M})
z^{k+l-N-i-j-M(n+1+\frac{\beta}{M})}
t_{\beta, n}\tilde\om'_{l}(x+[k+l-N-i-j-M(n+1+\frac{\beta}{M})]\ep)\notag\\
&&\La^{k+l-N-i-j-M(n+1+\frac{\beta}{M})}\notag\\
&&-\sum_{n\geq 0}\sum_
{\alpha=1}^{N}\sum_{k+l< i+j-N(n-\frac{\alpha-1}{N})}\om_i(n+1-\frac{\alpha-1}{N})
z^{k+l+N(n-\frac{\alpha-1}{N})-i-j}\om'_j(x+[N(n-\frac{\alpha-1}{N})-i-j]\ep)\notag\\
&&\tilde\om_k
(x+[N(n-\frac{\alpha-1}{N})-i-j]\ep)\frac{x+[k+N(n-\frac{\alpha-1}{N})-i-j]\ep}{M\epsilon}\notag\\
&&\tilde\om'_{l}
(x+[k+l+N(n-\frac{\alpha-1}{N})-i-j]\ep)
\La^{k+l+N(n-\frac{\alpha-1}{N})-i-j}\notag\\
&&-\sum_{n,m\geq 0}\sum_
{\alpha=1}^{N}\sum_
{\beta=-M+1}^{0}(n+1-\frac{\alpha-1}{N})
(m+1+\frac{\beta}{M})
\sum_{k+l< i+j-N(n-\frac{\alpha-1}{N})+M(m+1+\frac{\beta}{M})}\notag\\
&&\om_i\om'_j
(x+[N(n-\frac{\alpha-1}{N})-i-j]\ep)\tilde\om_k(x+[N(n-\frac{\alpha-1}{N})-i-j]\ep)\notag\\
&&
z^{N(n-\frac{\alpha-1}{N})-i-j+k+l-M(m+1+\frac{\beta}{M})}t_{\alpha, n}
t_{\beta, m}\notag\\
&&\tilde\om'_{l}(x+[N(n-\frac{\alpha-1}{N})-i-j+k+l-M(m+1+\frac{\beta}{M})]\ep)\La^{N(n-\frac{\alpha-1}{N})-i-j+k+l-M(m+1+\frac{\beta}{M})}\notag\\
&&-\sum_{i+j+M< k+l}\tilde\om_i\frac{x+i\ep}{M\epsilon}\tilde\om'_j(x+(i+j+M)\ep)\om_k(x+(i+j+M)\ep)\notag\\
&&\frac{x+(i+j+M-k)\ep}{N\epsilon}\om'_{l}(x+(i+j+M-k-l)\ep)z^{i+j+M-k-l}
\La^{i+j+M-k-l}\notag\\
&&-\sum_{n\geq 0}\sum_
{\beta=-M+1}^{0}\sum_{i+j< k+l+M(n+\frac{\beta}{M})}\tilde\om_i(n+1+\frac{\beta}{M})
z^{i+j-M(n+\frac{\beta}{M})-k-l}
t_{\beta, n}\tilde\om'_j(x+[i+j-M(n+\frac{\beta}{M})]\ep)\notag\\
&&\om_k(x+[i+j-M(n+\frac{\beta}{M})]\ep)
\frac{x+[i+j-k-M(n+\frac{\beta}{M})]\ep}{N\epsilon}\notag\\
&&\om'_{l}(x+[i+j-k-l-M(n+\frac{\beta}{M})]\ep)\La^{i+j-k-l-M(n+\frac{\beta}{M})}\notag\\
&&-\sum_{n\geq 0}\sum_
{\alpha=1}^{N}\sum_{i+j+M< k+l -N(n+1+\frac{\alpha-1}{N})}\tilde\om_i\frac{x+i\ep}{M\epsilon}\tilde\om'_j
(x+(i+j+M)\ep)\om_k(x+(i+j+M)\ep)\notag\\
&&(n+1-\frac{\alpha-1}{N})
z^{N(n+1-\frac{\alpha-1}{N})-k-l+i+j+M}\om'_{l}(x+(i+j+M+N(n+1-\frac{\alpha-1}{N})-k-l)\ep)
\notag\\&&\La^{i+j+M+N(n+1-\frac{\alpha-1}{N})-k-l}-\sum_{n,m\geq 0}\sum_
{\alpha=1}^{N}\sum_
{\beta=-M+1}^{0}(n+1-\frac{\alpha-1}{N})
(m+1+\frac{\beta}{M})t_{\alpha, n}t_{\beta, m}\notag\\
&&
\sum_{k+l> i+j+N(n+1-\frac{\alpha-1}{N})-M(m+\frac{\beta}{M})}\tilde\om_i\tilde\om'_j
(x+[i+j-M(m+\frac{\beta}{M})]\ep)\om_k(x+[i+j-M(m+\frac{\beta}{M})]\ep)\notag\\
&&
z^{i+j-M(m+\frac{\beta}{M})+N(n+1-\frac{\alpha-1}{N})-k-l}
\om'_{l}(x+[i+j-M(m+\frac{\beta}{M})+N(n+1-\frac{\alpha-1}{N})-k-l]\ep)\notag\\
&&\La^{i+j-M(m+\frac{\beta}{M})+N(n+1-\frac{\alpha-1}{N})-k-l}.\notag
\end{eqnarray}
}
Similarly, after a tedious calculation, we can get the $t^*_{2,1}$ flow function $P_R$  as follows
\begin{eqnarray*}\label{R2,1'}
\dfrac{\partial
P_R}{\partial{t^*_{2,1}
}}=A^*_{R2,1}P_R,
\end{eqnarray*}
where,
{\small
\begin{eqnarray}\label{R2,1a}
&&A^*_{R2,1}\\
&=&\sum_{i=0}^{\infty}\tilde\omega_i\frac{x+i\ep}{M\epsilon}\frac{x+i\ep+M\ep}{M\epsilon}z^{M+i}\La^{M+i}\notag\\
&&+\sum_{n\geq 0}\sum_
{\alpha=1}^{N}\sum_{i+j\leq N(n-\frac{\alpha-1}{N})}(n+1-\frac{\alpha-1}{N})
\omega_i(2\frac{x}{N\epsilon}+n-\frac{\alpha-1+2i}{N})z^{N(n-\frac{\alpha-1}{N})-i-j}\notag\\
&&
t_{\alpha, n}\omega'_j(x+[N(n-\frac{\alpha-1}{N})-i-j]\ep)\La^{N(n-\frac{\alpha-1}{N})-i-j}\notag\\
&&+\sum_{n,n'\geq 0}\sum_
{\alpha,\alpha'=1}^{N}(n+1-\frac{\alpha-1}{N})(n'+1-\frac{\alpha'-1}{N})
t_{\alpha, n}t_{\alpha', n'}\d_{\alpha+\alpha'-1,n+n'}\notag\\
&&
+\sum_{n\geq 0}\sum_
{\beta=-M+1}^{0}\sum_{i+j \geq -M(n+\frac{\beta}{M})}(n+1+\frac{\beta}{M})
\tilde\omega_i(2\frac{x}{M\epsilon}-n-\frac{\beta-2i}{M})z^{-M(n+\frac{\beta}{M})+i+j}
t_{\beta, n}\notag\\
&&\tilde\omega'_j(x+[i+j-M(n+\frac{\beta}{M})]\ep)\La^{-M(n+\frac{\beta}{M})+i+j}\notag\\
&&+\sum_{n,n'\geq 0}\sum_
{\beta,\beta'=-M+1}^{0}(n+1+\frac{\beta}{M})(n'+1+\frac{\beta'}{M})
t_{\beta, n}t_{\beta', n'}\d_{\beta+\beta',n+n'}
\notag\\
&&
+\sum_{k+l\geq i+j+N}\om_i\frac{x-i\ep}{N\epsilon}\om'_j(x-(N+i+j)\ep)\tilde\om_k(x-(N+i+j)\ep)
\frac{x-(N+i+j-k)\ep}{M\epsilon}\notag\\
&&\om'_{l}(x+(k+l-N-i-j)\ep)z^{k+l-N-i-j}\La^{k+l-N-i-j}\notag\\
&&+\sum_{n\geq 0}\sum_
{\beta=-M+1}^{0}\sum_{k+l\geq i+j+N+M(n+1+\frac{\beta}{M})}\om_i\frac{x-i\ep}{N\epsilon}\om'_j(x-(N+i+j)\ep)\tilde\om_k(x-(N+i+j)\ep)\notag\\
&&
z^{k+l-N-i-j-M(n+1+\frac{\beta}{M})}
t_{\beta, n}\tilde\om'_{l}(x+[k+l-N-i-j-M(n+1+\frac{\beta}{M})]\ep)\notag\\
&&\La^{k+l-N-i-j-M(n+1+\frac{\beta}{M})}\notag\\
&&+\sum_{n\geq 0}\sum_
{\alpha=1}^{N}\sum_{k+l\geq i+j-N(n-\frac{\alpha-1}{N})}\om_i(n+1-\frac{\alpha-1}{N})
z^{k+l+N(n-\frac{\alpha-1}{N})-i-j}\om'_j(x+[N(n-\frac{\alpha-1}{N})-i-j]\ep)\notag\\
&&\tilde\om_k
(x+[N(n-\frac{\alpha-1}{N})-i-j]\ep)\frac{x+[k+N(n-\frac{\alpha-1}{N})-i-j]\ep}{M\epsilon}\notag\\
&&\tilde\om'_{l}
(x+[k+l+N(n-\frac{\alpha-1}{N})-i-j]\ep)
\La^{k+l+N(n-\frac{\alpha-1}{N})-i-j}\notag\\
&&+\sum_{n,m\geq 0}\sum_
{\alpha=1}^{N}\sum_
{\beta=-M+1}^{0}(n+1-\frac{\alpha-1}{N})
(m+1+\frac{\beta}{M})
\sum_{k+l\geq i+j-N(n-\frac{\alpha-1}{N})+M(m+1+\frac{\beta}{M})}\notag\\
&&\om_i\om'_j
(x+[N(n-\frac{\alpha-1}{N})-i-j]\ep)\tilde\om_k(x+[N(n-\frac{\alpha-1}{N})-i-j]\ep)\notag\\
&&
z^{N(n-\frac{\alpha-1}{N})-i-j+k+l-M(m+1+\frac{\beta}{M})}t_{\alpha, n}
t_{\beta, m}\notag\\
&&\tilde\om'_{l}(x+[N(n-\frac{\alpha-1}{N})-i-j+k+l-M(m+1+\frac{\beta}{M})]\ep)\La^{N(n-\frac{\alpha-1}{N})-i-j+k+l-M(m+1+\frac{\beta}{M})}\notag\\
&&+\sum_{i+j+M\geq k+l}\tilde\om_i\frac{x+i\ep}{M\epsilon}\tilde\om'_j(x+(i+j+M)\ep)\om_k(x+(i+j+M)\ep)\notag\\
&&\frac{x+(i+j+M-k)\ep}{N\epsilon}\om'_{l}(x+(i+j+M-k-l)\ep)z^{i+j+M-k-l}
\La^{i+j+M-k-l}\notag\\
&&+\sum_{n\geq 0}\sum_
{\beta=-M+1}^{0}\sum_{i+j\geq k+l+M(n+\frac{\beta}{M})}\tilde\om_i(n+1+\frac{\beta}{M})\notag\\
&&
z^{i+j-M(n+\frac{\beta}{M})-k-l}
t_{\beta, n}\tilde\om'_j(x+[i+j-M(n+\frac{\beta}{M})]\ep)\notag\\
&&\om_k(x+[i+j-M(n+\frac{\beta}{M})]\ep)
\frac{x+[i+j-k-M(n+\frac{\beta}{M})]\ep}{N\epsilon}\notag\\
&&\om'_{l}(x+[i+j-k-l-M(n+\frac{\beta}{M})]\ep)\La^{i+j-k-l-M(n+\frac{\beta}{M})}\notag\\
&&+\sum_{n\geq 0}\sum_
{\alpha=1}^{N}\sum_{i+j+M\geq k+l -N(n+1-\frac{\alpha-1}{N})}\tilde\om_i\frac{x+i\ep}{M\epsilon}\tilde\om'_j
(x+(i+j+M)\ep)\om_k(x+(i+j+M)\ep)\notag\\
&&(n+1-\frac{\alpha-1}{N})
z^{N(n+1-\frac{\alpha-1}{N})-k-l+i+j+M}\om'_{l}(x+(i+j+M+N(n+1-\frac{\alpha-1}{N})-k-l)\ep)
\notag\\&&\La^{i+j+M+N(n+1-\frac{\alpha-1}{N})-k-l}+\sum_{n,m\geq 0}\sum_
{\alpha=1}^{N}\sum_
{\beta=-M+1}^{0}(n+1-\frac{\alpha-1}{N})
(m+1+\frac{\beta}{M})t_{\alpha, n}t_{\beta, m}\notag\\
&&
\sum_{k+l\leq i+j+N(n+1-\frac{\alpha-1}{N})-M(m+\frac{\beta}{M})}\tilde\om_i\tilde\om'_j
(x+[i+j-M(m+\frac{\beta}{M})]\ep)\om_k(x+[i+j-M(m+\frac{\beta}{M})]\ep)\notag\\
&&
z^{i+j-M(m+\frac{\beta}{M})+N(n+1-\frac{\alpha-1}{N})-k-l}
\om'_{l}(x+[i+j-M(m+\frac{\beta}{M})+N(n+1-\frac{\alpha-1}{N})-k-l]\ep)\notag\\
&&\La^{i+j-M(m+\frac{\beta}{M})+N(n+1-\frac{\alpha-1}{N})-k-l}.\notag
\end{eqnarray}
}
\end{CJK*}
\end{document}